\newcommand{\Suffix}{\mathit{Suffix}}
\newcommand{\STrie}{\mathit{STrie}}
\newcommand{\STree}{\mathit{STree}}
\newcommand{\DAWG}{\mathit{DAWG}}
\newcommand{\Endpos}{\mathit{Epos}}
\newcommand{\suflink}{\mathit{slink}}
\newcommand{\Long}{\mathit{long}}
\newcommand{\lspacer}{\kern.2ex}
\newcommand{\spacer}{\rule{0pt}{1.6ex}\kern.2ex}
\newcommand{\rev}[1]{\overline{#1}}
\newcommand{\longsuf}{\mathit{lrs}}
\newcommand{\LPT}{\mathit{LPT}}
\newcommand{\sw}{\mathsf{w}}
\newcommand{\wdeg}{\mathsf{w\_deg}}
\newcommand{\bottom}{\perp}
\newcommand{\SLT}{\mathit{SLT}}
\def\W#1#2{{\cal W}_{#1}({#2})}
\newtheorem{fact}{Fact}
\def\fn(#1){\fnbf{#1}}
\newtheorem{theorem}{Theorem}
\newtheorem{lemma}[theorem]{Lemma}
\newtheorem{corollary}[theorem]{Corollary}
\newenvironment{remark}{Remark}
\title{
  Fully-Online Suffix Tree and Directed Acyclic Word Graph\\ Construction for Multiple Texts
}
\author[1]{Takuya Takagi}
\author[2]{Shunsuke Inenaga}
\author[1]{Hiroki Arimura}
\author[ ]{Dany Breslauer}
\author[3]{Diptarama Hendrian}
\affil[1]{\textit{Graduate School of IST, Hokkaido University, Japan} \newline
\texttt{\small \{tkg, arim\}@ist.hokudai.ac.jp}
}
\affil[2]{\textit{Department of Informatics, Kyushu University, Japan} \newline
\texttt{\small inenaga@inf.kyushu-u.ac.jp}
}
\affil[3]{\textit{Graduate School of Information Sciences, Tohoku University, Japan} \newline
\texttt{\small diptarama@shino.ecei.tohoku.ac.jp}
}
\date{}
\begin{document}
\maketitle

\begin{abstract}
We consider construction of the {\em suffix tree} and 
the {\em directed acyclic word graph (DAWG)}
indexing data structures for a collection $\mathcal{T}$ of texts,
where a new symbol may be appended to any text in 
$\mathcal{T} = \{T_1, \ldots, T_K\}$, \emph{at any time}.
This {\em fully-online} scenario, which arises in dynamically indexing multi-sensor data,
is a natural generalization of the long solved
\emph{semi-online} text indexing problem, where 
texts $T_1, \ldots, T_{k}$ are permanently fixed
before the next text $T_{k+1}$ is processed for each $1 \leq k < K$.
We present fully-online algorithms that 
construct the \emph{suffix tree} and the \emph{DAWG} for $\mathcal{T}$ in 
$O(N \log \sigma)$ time and $O(N)$ space, where $N$ is the total lengths of the
strings in $\mathcal{T}$ and $\sigma$ is their alphabet size.
%The DAWG algorithm utilizes the synchronized maintenance of
%the suffix tree of the texts in $\mathcal{T}$ together with the
%suffix tree of their reverse.
The standard explicit representation of 
the suffix tree leaf edges and some DAWG edges must be
relaxed in our fully-online scenario,
since too many updates on these edges are required in the worst case.
Instead, we provide access to the updated suffix tree leaf edge
labels
and the DAWG edges to be redirected via auxiliary data structures, 
in $O(\log \sigma)$ time per added character.\footnote{A preliminary conference version of this 
paper~\cite{TakagiIA16} contained an error in the analysis of the
fully-online DAWG algorithm, where the claimed $O(N \log \sigma)$ time bound 
neglected to account for the DAWG edge re-directions. 
We show that in some cases $\Theta(N\min(K, \sqrt N))$ or $\Theta(N^{1.5})$ DAWG edge re-directions may be structurally required,
and that the correct time bound for the algorithm in the
conference paper~\cite{TakagiIA16} is $O(N\min(K, \sqrt N)\log \sigma)$
(Lemma~\ref{lem:Weiner_lower_bound}).}\\

\noindent Keywords:  string algorithms, suffix trees, DAWGs, multiple texts, online algorithms
\end{abstract}

%% BODY %%%%%%%%%%%%%%%%%%%%%%%%% 

%%% introduction
\section{Introduction}

Text indexing is a fundamental problem in computer science, which arises in many applications including text retrieval, molecular biology, signal processing, and sensor data analysis.
In this paper, we focus on indexing 
a collection of multiple texts,
so that subsequent pattern matching queries
can be answered quickly.
In particular, we study online indexing for a collection 
$\mathcal{T}$ of multiple texts,
where a new character can be appended to each text at \emph{any} time. 
Such fully-online indexing for multiple growing texts 
has potential applications to continuous processing of data streams, 
where a number of symbolic events or data items are produced from multiple, 
rapid, time-varying, and unbounded data streams~\cite{Babcock:Babu:Datar:Motwani:2002,KeoghLC02}. 
For example, motif mining system tries to discover characteristic or 
interesting collective behaviors, such as frequent path or anomalies, 
from data streams generated by a collection of moving objects or 
sensors~\cite{WangZX2014,KeoghLC02}.

In this paper we consider two fundamental text indexing data structures,
the \emph{suffix tree}~\cite{Weiner} and the \emph{directed acyclic word graph}
(\emph{DAWG})~\cite{blumer85:_small_autom_recog_subwor_text}.
The suffix tree of a string $T$ is an edge-labeled rooted tree
which represents all the suffixes of $T$ in space linear in the length of $T$,
while the DAWG of $T$ is the smallest (partial) DFA that
recognizes all suffixes of $T$ which also occupies linear space.
The suffix tree can be constructed in $O(n \log \sigma)$ time and $O(n)$ space,
in a right-to-left online manner by Weiner's algorithm~\cite{Weiner},
and in a left-to-right online manner by Ukkonen's algorithm~\cite{Ukkonen95},
where $n$ is the length of $T$ and $\sigma$ is the alphabet size.
The DAWG of a given text $T$ can also be built in
$O(n \log \sigma)$ time and $O(n)$ space,
in a left-to-right online manner by Blumer et al.'s algorithm~\cite{blumer85:_small_autom_recog_subwor_text}.
The ``duality'' of the DAWG of $T$ and the suffix tree of the reversal
$\rev{T}$ of $T$ is known,
more specifically, the tree of the \emph{suffix links} of the DAWG of $T$
coincides with the suffix tree of $\rev{T}$~\cite{blumer85:_small_autom_recog_subwor_text,cr:94}.
We note that this property also holds for multiple texts.

Let $N$ be the final total length of the growing texts
in a fully-online text collection $\mathcal{T} = \{T_1, \ldots, T_K\}$.
The above existing suffix tree and DAWG construction 
algorithms for a single text also work within the same $O(N\log \sigma)$ time and $O(N)$ space bounds for a collection of growing texts 
in the \emph{semi-online} setting, where only the last inserted text can be extended~\cite{gusfield:97,Blumer87}.
However, special attention is needed
for construction of the suffix tree and the DAWG
in our \emph{fully-online} setting.
For the fully-online \emph{right-to-left} setting
where new characters are prepended to the multiple texts,
we show that a matching upper and lower bound
$\Theta(N\min(K, \sqrt N))$ or $\Theta(N^{1.5})$ holds
for a direct extension of Weiner's original algorithm,
where $K$ is the number of texts in the collection.
This also implies that
up to $\Theta(N\min(K, \sqrt N))$ or $\Theta(N^{1.5})$ DAWG
edge re-directions can be required during the DAWG construction
in the fully-online \emph{left-to-right} setting.
Also, we show that during the construction of the suffix tree
for a fully-online left-to-right text collection,
the open-ended suffix tree leaf edge label representation, the cornerstone
of Ukkonen's~\cite{Ukkonen95} on-line suffix tree algorithm, may
have to update the association between the numerous suffix tree 
leaf edge labels and the various texts $Omega(\frac{N^2}{K})$ times,
which turns $\Omega(N^2)$ when the collection contains only a
constant number of texts.
Thus, if we wish to stay within the $O(N\log \sigma)$ time bounds in the \emph{fully-online} setting, 
the DAWG edges and the suffix tree leaf edge labels
in the fully-online left-to-right setting
cannot be directly explicitly maintained.
We call this as the \emph{leaf ownership} problem.

To overcome the above difficulties,
we first show how to extend Weiner's algorithm
in the fully-online right-to-left setting,
with the aid of the nearest marked ancestor (NMA) data structures~\cite{westbrook92:_fast_increm_planar_testin}.
The resulting algorithm runs in $O(N \log \sigma)$ time
and takes $O(N)$ total space for a general ordered alphabet of size $\sigma$.
We then show that how an $O(N)$-space representation of the DAWG
can be incrementally maintained for a left-to-right online text collection,
in overall $O(N \log \sigma)$ time and $O(N)$ space.
Hence, at any moment during the fully-online growth of the texts,
we can find all $occ$ occurrences of a given pattern of length $M$
in the current text collection in $O(M \log \sigma + occ)$ time,
using any of these two text indexing structures.

Our next goal is to extend Ukkonen's construction~\cite{Ukkonen95} 
to fully-online left-to-right construction of suffix trees for multiple texts
in $O(N \log \sigma)$ time and $O(N)$ space bounds.
As was already mentioned above, however,
it is not possible to explicitly maintain
the owners of the leaf edges in the suffix tree here.
To overcome this difficulty, we present a new novel technique
which \emph{swaps} the active points among the texts that involved in the
update of the suffix tree,
together with a query algorithm which efficiently answers
the owners of the particular leaf edges involved in the update.
As a result, we obtain a natural extension of Ukkonen's construction
where suffixes are inserted to the current tree
in decreasing order of their lengths (called the \emph{forward approach}).
We also present an alternative method that inserts
suffixes in increasing order of their lengths
each time a new character is appended to one of the texts
(called the \emph{backward approach}).
Both methods work in $O(N \log \sigma)$ time and $O(N)$ space,
with the aid of the extended Weiner algorithm
for right-to-left text collection.

\subsection*{Related work}
We note that we can obtain fully-online text indexing
data structure for multiple texts 
by using existing more general dynamic text indexing data structures as follows.
To use the indexing data structure of Ferragina and Grossi~\cite{FerraginaG99}
which permits character-wise updates, 
we build a text $\$_1 \cdots \$_K$ which initially consists only of $K$ delimiters.
Then, appending a character $a$ to the $k$th text in the collection 
reduces to prepending $a$ to the $k$th delimiter $\$_k$.
Using this approach, 
the text indexing data structure of Ferragina and Grossi~\cite{FerraginaG99}
takes $O(N \log N)$ total time to be constructed,
requires $O(N \log N)$ space,
and allows pattern matching in $O(M + \log N + N\log M + occ)$ time.
Using the compressed indexing data structure for
a dynamic text collection of Chan et al.~\cite{ChanHLS07}, 
we can append a new character $a$ to the $k$th text $T_k$ 
by removing $T_k$ and then adding $T_ka$ in $O(|T_k|)$ time. 
This yields a fully-online text indexing structure
with $O(N^2 \log N)$ construction time 
and $O(N)$ bits of space (or $O(N / \log N)$ words of space assuming
$\Theta(\log N)$-bit machine word), 
supporting pattern matching in $O(M\log N + occ\log^2 N)$ time.

\section{Preliminaries}

\subsection{String notations}

Let $\Sigma$ be a general ordered alphabet.
Any element of $\Sigma^*$ is called a \emph{string}.
For any string $T$, let $|T|$ denote its length.
Let $\varepsilon$ be the empty string, namely, $|\varepsilon| = 0$.
If $T = XYZ$, then $X$, $Y$, and $Z$ are called 
a \emph{prefix}, a \emph{substring}, and a \emph{suffix} of $T$, respectively.
For any $1 \leq i \leq j \leq |T|$,
let $T[i..j]$ denote the substring of $T$ that begins at position $i$
and ends at position $j$ in $T$.
For convenience, let $T[i..j] = \varepsilon$ if $i > j$.
For any $1 \leq i \leq |T|$, let $T[i]$ denote the $i$th character of $T$.
For any string $T$, let $\Suffix(T)$ denote the set of suffixes of $T$,
and for any set $\mathcal{T}$ of strings,
let $\Suffix(\mathcal{T})$ denote the set of suffixes of all strings in $\mathcal{T}$.
Namely, $\Suffix(\mathcal{T}) = \bigcup_{T \in \mathcal{T}} \Suffix(T)$.
For any string $T$, let $\rev{T}$ denote the reversed string of $T$,
i.e., $\rev{T} = T[|T|] \cdots T[1]$.

Let $\mathcal{T} = \{T_1, \ldots, T_K\}$ be a collection of $K$ texts.
For each text $T_k$, the integer $k$ is called its \emph{id}.
For any $1 \leq k \leq K$,
let $\longsuf_{\mathcal{T}}(T_k)$ be the longest repeating suffix of $T_k$
that occurs at least twice in the texts of $\mathcal{T}$.

\subsection{Suffix trees and DAWGs for multiple texts}
\label{sec:def_ST_DAWG}

\subsubsection{Suffix tries}

The \emph{suffix trie} for a text collection $\mathcal{T} = \{T_1, \ldots, T_K\}$,
denoted $\STrie(\mathcal{T})$, is a trie which represents 
$\Suffix(\mathcal{T})$.
The size of $\STrie(\mathcal{T})$ is $O(N^2)$,
where $N$ is the total length of texts in $\mathcal{T}$.
We identify each node $v$ of $\STrie(\mathcal{T})$
with the string that $v$ represents.
A substring $x$ of a text in $\mathcal{T}$ is said to be \emph{branching} in $\mathcal{T}$,
if there exist two distinct characters $a, b \in \Sigma$ such that
both $xa$ and $xb$ are substrings of some texts in $\mathcal{T}$.
Clearly, node $x$ of $\STrie(\mathcal{T})$
is branching iff $x$ is branching in $\mathcal{T}$.

For each node $av$ of $\STrie(\mathcal{T})$ with $a \in \Sigma$ and $v \in \Sigma^*$,
let $\suflink(av) = v$.
This auxiliary edge $\suflink(av) = v$ from $av$ to $v$ is called a \emph{suffix link}.
We define the \emph{reversed suffix link}
$\W{a}{v} = av$ iff $\suflink(av) = v$.
For any node $v$ and $a \in \Sigma$,
if $av$ is not a substring of the texts in $\mathcal{T}$,
then $\W{a}{v}$ is undefined.
By definition, the reversed suffix links on $\STrie(\mathcal{T})$
form a rooted tree which coincides with $\STrie(\rev{\mathcal{T}})$,
the suffix trie for the collection $\rev{\mathcal{T}} = \{\rev{T_1}, \ldots, \rev{T_K}\}$ of the reversed texts.

\begin{figure}[t]
  \centerline{
    \includegraphics[width=1.0\linewidth]{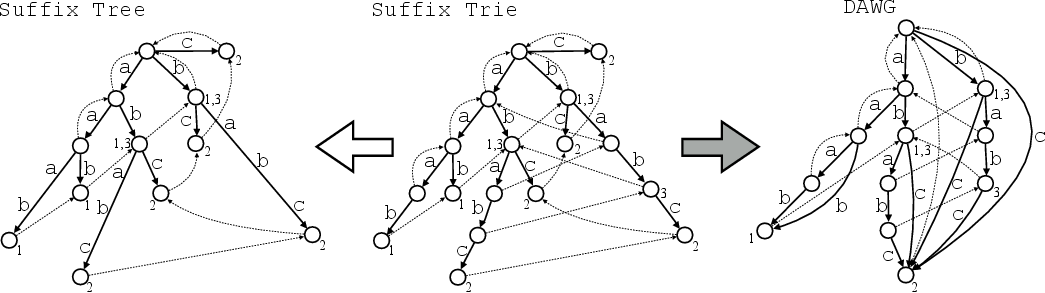}
  }
  \caption{Illustration for $\STrie(\mathcal{T})$, $\STree(\mathcal{T})$,
  and $\DAWG(\mathcal{T})$ with
  $\mathcal{T} = \{T_1 = \mathtt{aaab}, T_2 = \mathtt{ababc}, T_3 = \mathtt{bab}\}$.
  The solid arrows and broken arrows represent 
  the edges and the suffix links of each data structure, respectively.
  The number $k$~($k = 1, 2, 3$) beside each node indicates that 
  the node represents a suffix of $T_k$.
  The nodes $[\mathtt{ab}]_{\mathcal{T}}$ and $[\mathtt{b}]_{\mathcal{T}}$ are separated
  in $\DAWG(\mathcal{T})$ since the node $\mathtt{bab}$ in $\STrie(\mathcal{T})$ is represents
  a suffix of $T_3$, while the node $\mathtt{abab}$ does not 
  (see also the subtrees rooted at nodes $\mathtt{ab}$ and $\mathtt{b}$ in $\STrie(\mathcal{T})$).
  }
  \label{fig:trie_tree_dawg}
\end{figure}

\subsubsection{Suffix trees}

The \emph{suffix tree}~\cite{Weiner} 
for a text collection $\mathcal{T}$, denoted $\STree(\mathcal{T})$,
is a ``compacted trie'' which represents $\Suffix(\mathcal{T})$.
$\STree(\mathcal{T})$ is 
obtained by compacting every path of $\STrie(\mathcal{T})$
which consists of non-branching internal nodes (see Fig.~\ref{fig:trie_tree_dawg}).
Since every internal node of $\STree(\mathcal{T})$ is branching,
and since there are at most $N$ leaves in $\STree(\mathcal{T})$,
the numbers of edges and nodes are $O(N)$.
The edge labels of $\STree(\mathcal{T})$ are
non-empty substrings of some text in $\mathcal{T}$.
By representing each edge label $x$ with a triple $\langle k, i, j \rangle$ of integers
s.t. $x = T_k[i..j]$, $\STree(\mathcal{T})$ can be stored with $O(N)$ space.
We say that any branching (resp. non-branching) substring of $\mathcal{T}$
is an \emph{explicit node} (resp. \emph{implicit node}) of $\STree(\mathcal{T})$.
An implicit node $x$ is represented by a triple $(v, a, \ell)$, 
called a \emph{reference} to $x$,
such that $v$ is an explicit ancestor of $x$, 
$a$ is the first character of the path from $v$ to $x$, 
and $\ell$ is the length of the path from $v$ to $x$. 
A reference $(v, a, \ell)$ to node $x$ is called 
\emph{canonical} if $v$ is the lowest explicit ancestor of $x$.

For each explicit node $av$ of $\STree(\mathcal{T})$ with
$a \in \Sigma$ and $v \in \Sigma^*$, let $\suflink(av) = v$.
For each explicit node $v$ and $a \in \Sigma$,
we also define the reversed suffix link $\W{a}{v} = avx$
where $x \in \Sigma^*$ is the shortest string
such that $avx$ is an explicit node of $\STree(\mathcal{T})$.
$\W{a}{v}$ is undefined if $av$ is not a substring of texts in $\mathcal{T}$.
These reversed suffix links are also called as \emph{Weiner links}
(or \emph{W-link} in short) in the literature~\cite{BreslauerI13}.
A W-link $\W{a}{v} = avx$ is said to be \emph{hard}
if $x = \varepsilon$, and \emph{soft} if $x \in \Sigma^+$.
%We also define \emph{soft Weiner links} as follows:
%For each explicit node $v$ of $\STree(\mathcal{T})$
%and $a \in \Sigma$,
%if $av$ is a substring of some text in $\mathcal{T}$
%and $av$ is an implicit node of $\STree(\mathcal{T})$,
%then $\SW{a}{v} = av$.
%In our algorithms to follow,
%we will explicitly maintain the hard W-links.
%On the other hand, we will only maintain
%whether each soft W-link $\W{a}{v}$ is defined,
%but will not explicitly maintain which node
%the soft (existing) W-link $\W{a}{v}$ points to.
Let $\sw$ be a Boolean function such that 
for any explicit node $v$ and $a \in \Sigma$,
$\sw_a(v) = 1$ iff (soft or hard) W-link $\W{a}{v}$ exists.
Notice that if $\sw_a(v) = 1$ for a node $v$ and $a \in \Sigma$,
then $\sw_a(u) = 1$ for every ancestor of $v$.

\subsubsection{Directed acyclic word graphs (DAWGs)}

The \emph{directed acyclic word graph} (\emph{DAWG} in short)~\cite{blumer85:_small_autom_recog_subwor_text,Blumer87}
of a text collection $\mathcal{T}$, denoted $\DAWG(\mathcal{T})$,
is a smallest DAG which represents $\Suffix(\mathcal{T})$.
$\DAWG(\mathcal{T})$ is 
obtained by merging identical subtrees of $\STrie(\mathcal{T})$
connected by the suffix links (see Fig.~\ref{fig:trie_tree_dawg}).
Hence, %in contrast to $\STree(\mathcal{T})$,
the label of every edge of $\DAWG(\mathcal{T})$ is a single character.
The numbers of nodes and edges of $\DAWG(\mathcal{T})$ are $O(N)$~\cite{Blumer87},
and hence $\DAWG(\mathcal{T})$ can be stored with $O(N)$ space.
$\DAWG(\mathcal{T})$ can be defined formally as follows:
For any string $x$, let $\Endpos_{\mathcal{T}}(x)$
be the set of ending positions of $x$ in the texts in $\mathcal{T}$,
i.e.,
\[
\Endpos_{\mathcal{T}}(x) = \{(k, j) \mid x = T_k[j-|x|+1..j], 1 \leq j \leq |T_k|, 1 \leq k \leq K\}.
\]
Consider an equivalence relation $\equiv_{\mathcal{T}}$ on substrings $x,y$ of texts in $\mathcal{T}$
such that 
$x \equiv_{\mathcal{T}} y$ iff $\Endpos_{\mathcal{T}}(x) = \Endpos_{\mathcal{T}}(y)$.
For any substring $x$ of texts of $\mathcal{T}$, 
let $[x]_{\mathcal{T}}$ denote the equivalence class w.r.t. $\equiv_{\mathcal{T}}$.
There is a one-to-one correspondence between 
each node $v$ of $\DAWG(\mathcal{T})$ 
and each equivalence class $[x]_{\mathcal{T}}$,
and hence we will identify each node $v$ of $\DAWG(\mathcal{T})$
with its corresponding equivalence class $[x]_{\mathcal{T}}$.
Let $\Long([x]_{\mathcal{T}})$ denote the longest member of $[x]_{\mathcal{T}}$.
By the definition of equivalence classes,
$\Long([x]_{\mathcal{T}})$ is unique for each $[x]_{\mathcal{T}}$
and every member of $[x]_{\mathcal{T}}$ is a suffix of $\Long([x]_{\mathcal{T}})$.
If $x, xa$ are substrings of some text in $\mathcal{T}$ with 
$x \in \Sigma^*$ and $a \in \Sigma$,
then there exists an edge labeled with character $a \in \Sigma$
from node $[x]_{\mathcal{T}}$ to node $[xa]_{\mathcal{T}}$.
This edge is called \emph{primary} if $|\Long([x]_{\mathcal{T}})| + 1 = |\Long([xa]_{\mathcal{T}})|$,
and is called \emph{secondary} otherwise.
For each node $[x]_{\mathcal{T}}$ of $\DAWG(\mathcal{T})$ with $|x| \geq 1$,
let $\suflink([x]_{\mathcal{T}}) = y$, where
$y$ is the longest suffix of $\Long([x]_{\mathcal{T}})$ which does not belong to 
$[x]_{\mathcal{T}}$.
In the example of Fig.~\ref{fig:trie_tree_dawg},
$[\mathtt{aaab}]_{\mathcal{T}} = \{\mathtt{aaab}, \mathtt{aab}\}$.
The edge labeled with $\mathtt{b}$ from node $[\mathtt{aaa}]_{\mathcal{T}}$ to node $[\mathtt{aaab}]_{\mathcal{T}}$
is primary, while the edge labeled with $\mathtt{b}$ 
from $[\mathtt{aa}]_{\mathcal{T}}$
to node $[\mathtt{aaab}]_{\mathcal{T}}$ is secondary.
$\suflink([\mathtt{aaab}]_{\mathcal{T}}) = [\mathtt{ab}]_{\mathcal{T}}$.

\subsubsection{Duality of suffix trees and DAWGs}

There exists a nice duality between suffix trees and DAWGs.
To observe this,
it is convenient to consider the collection
$\rev{\mathcal{T}}$ of the reversed texts
each of which begins with a special marker $\$_i$,
i.e., $\rev{\mathcal{T}} = \{\$_1\rev{T_1}, \ldots, \$_K \rev{T_K}\}$.
For ease of notation, let $S_k = \rev{\mathcal{T}_k}$ for $1 \leq k \leq K$
and $\mathcal{S} = \{\$_1 S_1, \ldots, \$_K S_K\} = \rev{\mathcal{T}}$.
Then, it is known (c.f.~\cite{blumer85:_small_autom_recog_subwor_text,Blumer87,cr:94})
that the reversed suffix links of
$\DAWG(\mathcal{S})$ coincide with the suffix tree
$\STree(\mathcal{T})$ for the original text collection $\mathcal{T}$.
This fact can also be observed from the other direction.
Namely, the hard (resp. soft) W-links of $\STree(\mathcal{T})$ coincide with
the primary (resp. secondary) edges of $\DAWG(\mathcal{S})$.

Intuitively, this duality holds because
\begin{enumerate}
  \item[(1)] The reversed suffix links of $\STrie(\mathcal{S})$ form $\STrie(\mathcal{T})$ (and vice versa), and
  \item[(2)] When we construct $\DAWG(\mathcal{S})$ from $\STrie(\mathcal{S})$, we merge isomorphic subtrees that are connected by suffix links.
    During this merging process, the reversed suffix links get compacted and
    the resulting compacted links form the edges of $\STree(\mathcal{T})$.
\end{enumerate}

Using this duality, we can immediately show that the total number of
hard and soft W-links is linear in the total text length $N$,
since the number of edges of the DAWG is linear in $N$.
This also means that we can easily maintain the Boolean indicator $\sw$
with $O(N)$ space, so that $\sw_{a}(v)$ for a given node $v$ and $a \in \Sigma$
can be answered in $O(\log \sigma)$ time
(e.g., at each node $v$ we can maintain a BST storing only the characters $c$
s.t. $\sw_{c}(v) = 1$.)

%The following fact follows from the definition of branching substrings:
%\begin{fact} \label{fact:branching}
%For any substring $x$ of texts in $\mathcal{T}$, 
%node $x$ is branching (explicit) in $\STree(\mathcal{\mathcal{T}})$ 
%iff node $[x]_{\mathcal{T}}$ is branching in $\DAWG(\mathcal{T})$.
%\end{fact}

\subsection{Fully-online text collection}
We consider a collection $\{T_1, \ldots, T_K\}$ of $K$ growing texts,
where each text $T_k$~($1 \leq k \leq K$) is initially the empty string $\varepsilon$.
Given a pair $(k, a)$ of a text id $k$ and a character $a \in \Sigma$
which we call an \emph{update operator},
the character $a$ is appended to the $k$-th text of the collection.
For a sequence $U$ of update operators,
let $U[1..i]$ denote the sequence of the first $i$ update operators in $U$
with $0 \leq i \leq |U|$.
Also, for $0 \leq i \leq |U|$
let $\mathcal{T}_{U[1..i]}$ denote the collection of texts
which have been updated according to the first $i$ update operators of $U$.
For instance, consider a text collection of three texts 
which grow according to the following sequence
$U = (1, \mathtt{a}), (2, \mathtt{b}), (2, \mathtt{a}), (3, \mathtt{a}), 
(1, \mathtt{a}), (3, \mathtt{c}), (3, \mathtt{b}), (2, \mathtt{b}), (1, \mathtt{a}),  
(1, \mathtt{b}), \\ (3, \mathtt{c}), (3, \mathtt{b}), (1, \mathtt{c}),  
(3, \mathtt{b}), (2, \mathtt{c})$ of 15 update operators.
Then, 
\[
{\arraycolsep = 1mm
 \mathcal{T}_{U[1..0]} = \left\{
  \begin{array}{c}
     \overset{\spacer}{\varepsilon} \\
     \overset{\spacer}{\varepsilon} \\
     \overset{\spacer}{\varepsilon} 
  \end{array}
  \right\},\ \ldots, \
 \mathcal{T}_{U[1..14]} = \left\{
  \begin{array}{cccccc}
     \overset{1}{\lspacer \mathtt{a}\spacer} & \overset{5}{\lspacer \mathtt{a}\spacer} & \overset{9}{\lspacer \mathtt{a}\spacer} & \overset{10}{\lspacer \mathtt{b}\spacer} & \overset{13}{\lspacer \mathtt{c}\spacer} & \\
     \overset{2}{\lspacer \mathtt{b}\spacer} & \overset{3}{\lspacer \mathtt{a}\spacer} & \overset{8}{\lspacer \mathtt{b}\spacer} & & & \\
     \overset{4}{\lspacer \mathtt{a}\spacer} & \overset{6}{\lspacer \mathtt{c}\spacer} & \overset{7}{\lspacer \mathtt{b}\spacer} & \overset{11}{\lspacer \mathtt{c}\spacer} & \overset{12}{\lspacer \mathtt{b}\spacer} & \overset{14}{\lspacer \mathtt{b}\spacer}
  \end{array}
  \right\}, \
 \mathcal{T}_{U[1..15]} = \left\{
  \begin{array}{cccccc}
     \overset{1}{\lspacer \mathtt{a}\spacer} & \overset{5}{\lspacer \mathtt{a}\spacer} & \overset{9}{\lspacer \mathtt{a}\spacer} & \overset{10}{\lspacer \mathtt{b}\spacer} & \overset{13}{\lspacer \mathtt{c}\spacer} & \\
     \overset{2}{\lspacer \mathtt{b}\spacer} & \overset{3}{\lspacer \mathtt{a}\spacer} & \overset{8}{\lspacer \mathtt{b}\spacer} & \overset{15}{\lspacer \mathtt{c}\spacer} & & \\
     \overset{4}{\lspacer \mathtt{a}\spacer} & \overset{6}{\lspacer \mathtt{c}\spacer} & \overset{7}{\lspacer \mathtt{b}\spacer} & \overset{11}{\lspacer \mathtt{c}\spacer} & \overset{12}{\lspacer \mathtt{b}\spacer} & \overset{14}{\lspacer \mathtt{b}\spacer}
  \end{array}
  \right\} 
}\]
where the superscript $i$ over each character $a$ in the $k$-th text 
implies that $U[i] = (k, a)$.
For instance, $U[15] = (2, \mathtt{c})$ and hence $\mathtt{c}$
was appended to the 2nd text $T_2 = \mathtt{bab}$ in $\mathcal{T}_{U[1..14]}$, 
yielding $T_2 = \mathtt{babc}$ in $\mathcal{T}_{U[1..15]}$. 

If there is no restriction on $U$ like the one in the example above,
then $U$ is called \emph{fully-online}.
If there is a restriction on $U$ such that once a new character is appended to the $k$-th text,
then no characters will be appended to its previous $k-1$ texts,
then $U$ is called \emph{semi-online}.
Hence, any semi-online sequence of update operators is 
of form
$$(1, T_1[1]), \ldots, (1, T_1[|T_1|]), \ldots, (K, T_K[1]), \ldots, (K, T_K[|T_K|]).$$

When we talk about the duality of suffix trees and DAWGs
in our fully-online scenario,
$\mathcal{S}_{U[1..i]}$ represents the set of the reversed texts from
$\mathcal{T}_{U[1..i]}$.

%Section~\ref{sec:semi-online} reviews
%previous algorithms which incrementally construct 
%the DAWG and the suffix tree for a growing text collection 
%in the semi-online setting.
%Sections~\ref{sec:fully-online_dawg} and~\ref{sec:fully-online_suffix-tree} 
%propose our new algorithms
%which incrementally construct the DAWG and the suffix tree for a text collection
%in the fully-online setting, respectively.

\section{Fully-online version of DAWG and Weiner's suffix tree algorithm}
\label{sec:weiner}

Blumer et.~al.~\cite{blumer85:_small_autom_recog_subwor_text,Blumer87} and
Crochemore~\cite{crochemore:86} introduced the DAWG, also called suffix automaton, and 
gave a DAWG construction algorithm for a collection of semi-online texts. 
%might want to reference also Maxime's Transducers & Repetition paper?
Their DAWG construction algorithm is very closely related to Weiner's reverse right-to-left 
suffix tree construction algorithm~\cite{cr:94,gusfield:97,lothaire-book:2005,Weiner}.
In fact, both algorithms build dual structures and each exposes different parts of these structures,
where the collection of semi-online left-to-right text inputs to the DAWG algorithm
can be perceived as the same texts reversed right-to-left inputs to Weiner's suffix tree algorithm.
Blumer et al.'s algorithm does not require a terminating $\$$ symbol and it was noted
that the set of nodes of the DAWG and the reverse string's suffix tree coincide 
if the terminator symbols are present in both sets of inputs.

\subsection{Semi-online construction of Weiner's suffix trees and DAWGs}

We briefly explain how 
the suffix tree of a collection of semi-online right-to-left texts
can be built by using Weiner's algorithm.
For convenience, we assume that
there is an auxiliary node $\bottom$ that is the parent of the root $r$.
We also assume that the edge from $\bottom$ to $r$
is labeled with \emph{any} character $c$ from $\Sigma$,
$\W{c}{\bottom} = r$, and $\suflink(r) = \bottom$.
Assume that we have constructed
$\STree(\{T_1 \$_1, \ldots, T_{k-1} \$_{k-1}\})$
in which all the hard W-links have been constructed
and the Boolean indicator $\sw$ have been appropriately maintained.
%and the existence of all the soft W-links are also maintained.
Now we process the $k$-th and extend it from right-to-left.
%We insert the suffixes of $T_k \$_k$ in increasing order of their length.
Since the end-marker $\$_k$ is a unique character,
a new leaf representing $\$_k$ is created.
Suppose we have inserted the leaves for the suffixes of
$T_k \$_k$ with $T_k \in \Sigma^*$.
The leaf that represents the $k$-th text $T_k \$_k$
is called the \emph{handle leaf} for $T_k \$_k$. 
Now we are to prepend a new character $a$
and insert the extended text $aT_k\$_k$ to the tree.
We begin with the handle leaf $\ell$ for $T_k\$_k$.
We walk up from the handle leaf $\ell$
until finding the lowest explicit ancestor $u'$ of $\ell$
which has hard W-link $\W{a}{u'}$ defined for the added character $a$.
Also, let $u$ be the lowest explicit ancestor of $\ell$
such that $\sw_{a}(u) = 1$.
%such that a soft W-link $\W{a}{v}$ is defined if it exists.
Note that $u$ is a descendant of $u'$.
%If such a node $v$ does not exist, then let $v = u$.
Let $b$ be the first character of the path label from $u'$ to $\ell$.
We move to the node $v' = au'$ using the hard W-link $\W{a}{u'}$,
and let $v'' = au'by$ be the child of $v'$ below the edge
whose label begins with $b$, where $y \in \Sigma^*$.
There are two cases:
(1) If $|v''|-|v'| = |au'by| - |au'| = |by| > |u|-|u'|$,
then we create a new explicit node $v = v''[1..|u|+1]$
and set $\W{a}{u} = v$.
(2) Otherwise ($|by| = |u|-|u'|$),
then there already exists an explicit node $v''[1..|u|+1]$
and let $v$ be this node.
In both cases, we insert a new leaf $\ell'$
representing $aT_k\$_k$ as a child of $v$,
and create a new hard W-link $\W{a}{\ell} = \ell'$.
This insertion point $v$ for $\ell'$
represents the longest prefix of $aT_k\$_k$
that appears at least twice in the updated text collection,
and hence, $v$ is sometimes called as the \emph{longest repeating prefix} of $aT_k\$_k$.
Let $s$ be any node in the path from $u$ to $\ell$ such that $s \neq u$
(if any).
In the suffix tree before the text $T_k \$_k$ was extended with $a$,
we had $\sw_{a}(s) = 0$.
Now in the updated suffix tree, 
we update $\sw_{a}(s) = 1$ due to the insertion of the new handle leaf $\ell'$
which represents $a T_k \$_k$.
Also, node $s$ gets a new soft W-link $\W{a}{s} = \ell'$.
\hspace{-10truemm}These updates are common to both of Cases (1) and (2).
There can be further updates in Case (1):
Let $s'$ be any node in the path from $u'$ to $u$
such that $s' \neq u'$ and $s' \neq u$ (if any).
In the suffix tree before the text $T_k \$_k$ was extended with $a$,
node $s'$ had a soft W-link $\W{a}{s'} = v''$.
Now in the updated suffix tree,
this soft W-link is redirected as $\W{a}{s'} = v$.
Also, the soft W-link $\W{a}{u} = v''$ in the previous suffix tree
gets redirected and becomes the hard W-link $\W{a}{u} = v$
in the updated suffix tree.
See Figure~\ref{fig:weiner} for illustration.

\begin{figure}[htb]
  \hspace{20pt}
  \subfigure[]{%
    \includegraphics[height=4cm]{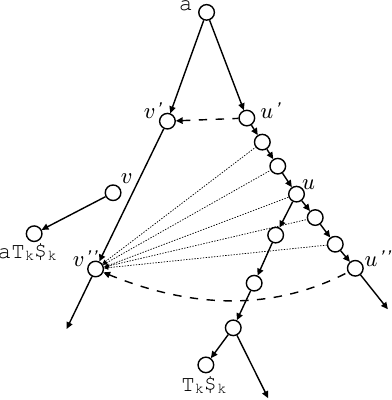}}%
  \hspace{20pt}
  \subfigure[]{%
    \includegraphics[height=4cm]{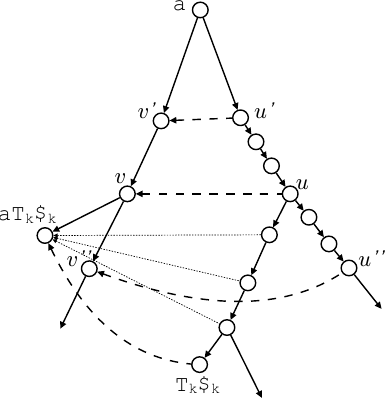}}%
  \hspace{20pt}
  \subfigure[]{%
    \includegraphics[height=4cm]{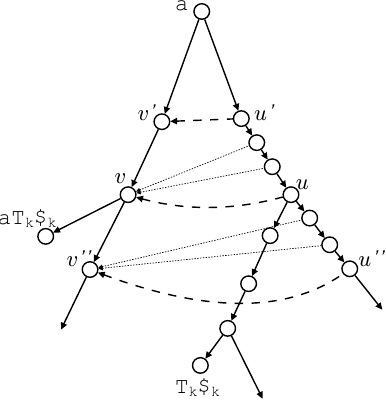}}%
  \vspace*{-0.1cm}
\caption{\small{Extending the text $T_k\$_k$ to $a T_k\$_k$.
Soft W-links are shown short-dashed and hard W-links are shown long-dashed.
(a) relevant existing W-links before extending. 
(b) New W-links pointing to $aT_k\$_k$ are created from all nodes on the 
path between $T_k\$_k$ up to $u$. 
(c) Existing W-links pointing to $v''$ from all nodes
on the path between $u$ up to $u'$ are 
redirected to point to $v$ instead of $v''$.
The new hard W-link $\W{a}{T_k\$_k}=aT_k\$_k$ and redirected hard W-link $\W{a}{u}=v$
have corresponding nodes on the path to $aT_k\$_k$, while all the other new soft W-links
involved point to $aT_k\$_k$ and the redirected soft W-links involved point to $v$.
The new node $v$ also adopts all the outgoing W-links from $v''$ (not shown).
\label{fig:weiner}}}
\end{figure}

Weiner's original algorithm is designed for a single right-to-left text,
%and the amortization argument for walking up from the leaf
%representing the previous suffix 
%is based on the linearity of the size of the suffix tree~\cite{Weiner,cr:94}.
%Namely, for each prepended character $a$,
and for each prepended character $a$ to the text $T\$$,
the number of internal explicit nodes
from the leaf for $T$ to its lowest ancestor $u'$
for which hard W-link $\W{a}{u'}$ exists can be amortized constant.
This amortization argument is based on the fact that
the depth of the path from the root to the handle leaf $\ell'$
representing the extended text $a T \$$ is by at most one larger than
that of the path from the root to the handle leaf $\ell$ representing $T \$$.
This property holds also in the semi-online setting,
since while the $k$th text $T_k\$_k$ is being extended from right to left,
other texts remain static and thus do not change
the topology of the suffix tree.
Hence, we can build $\STree(\mathcal{T})$ for a collection
of semi-online left-to-right texts in $O(N \log \sigma)$ time and $O(N)$ space.

Blumer et al.~\cite{blumer85:_small_autom_recog_subwor_text} showed
how the DAWG for a collection $\mathcal{S}$ of semi-online left-to-right texts
can be built in $O(N \log \sigma)$ time.
Recall that each DAWG node represents an equivalence class of
substrings which have the same ending positions in the texts.
Appending a new character $a$ to the currently processed text $\$_k S_k$
can affect some equivalence class under the current text collection.
This can cause splitting an existing node into two nodes.
Let $w$ be the node that gets split
and $w'$ be the copy of this node $w$.
The original node $w$ will contains longer substrings than the copy $w'$.
The longest element belonging to $w'$ is the
\emph{longest repeating suffix} $X$
of $\$_k S_k a$ in the updated text collection,
and any element of $w$ that is shorter than $X$ will belong to $w'$.
Eventually, any element of $w$ that is longer than $X$ remains in $w$.
This node split operation can be done by redirecting
corresponding in-coming edges from $w$ to $w'$.
%After redirecting all edges, the longer elements stay in the same node
%and finally $w$ becomes $w'$.
The key argument in the time analysis of Blumer et al.'s algorithm
is that this cost of redirecting in-coming edges
can also be amortized constant per added character $a$.
Observe that this update is exactly the same as
the above-mentioned update of the suffix tree
for the corresponding right-to-left text collection.
For instance, the longest repeating suffix of $\$_k S_k a$
for the current left-to-right text collection 
is the reverse of the longest repeating prefix of $a T_k \$_k$
for the corresponding right-to-left text collection.
Also, redirecting those in-coming edges in the DAWG
are exactly the same as updating a soft W-link to a hard one 
and redirecting soft W-links, in the suffix tree
of the corresponding right-to-left texts (recall Case (1) above).
%This cost is indeed equivalent to the above-mentioned cost
%in
%Weiner's algorithm for semi-online right-to-left texts $\mathcal{T}$,
%for walking up from the current leaf to
%its lowest ancestor for which a hard W-link for the character $a$ is defined.
Consequently, we can build the DAWG for a collection
of semi-online left-to-right texts in $O(N \log \sigma)$ time
and $O(N)$ space as well.

\subsection{Fully-online construction of Weiner's suffix trees and DAWGs}

In this subsection, we consider how to maintain the suffix tree
for a collection of $K$ texts which grow from right to left
in a fully-online manner.
This means that we will have to maintain $K$ handle leaves for the $K$ texts
simultaneously.
We also consider how to maintain the DAWG for a collection of $K$ texts
which grow from left to right in a fully-online manner.

Unfortunately, 
the identical amortization argument in both algorithms does not carry over 
in the fully-online setting.
However, we will show next that Weiner's algorithm can be modified
to work within the desired $O(N \log \sigma)$ time and $O(N)$ space bounds
with the aid of $\sigma$ nearest marked ancestor (NMA)
data structures of total size $O(N)$,
where $\sigma$ denotes the number of all distinct characters
appearing in the texts in the collection.
Moreover, the same data structures
can provide access to the DAWG edges, which cannot be maintained explicitly
within our bounds, 
in $O(\log \sigma)$ time per edge query.

We will use the following NMA data structure
as a building block of our algorithm.
\begin{lemma}[\cite{westbrook92:_fast_increm_planar_testin}] \label{lem:nma}
  There exists an NMA data structure
  for a growing rooted tree,
  which supports the following operations 
  in amortized $O(1)$ time each:
  1) find the NMA of a given node;
  2) insert an unmarked node;
  3) mark an unmarked node.
  This NMA data structure requires linear space
  in the size of the tree.
\end{lemma}

Suppose that we have $\STree(\mathcal{T}_{U[1..i-1]})$
for a fully-online right-to-left text collection $\mathcal{T}_{U[1..i-1]}$
and assume $U[i] = (k, a)$, i.e., the $k$th text $T_k\$_k$ gets
extended with a new character $a$ being prepended to it.
As in the case with the semi-online texts,
some new soft and hard W-links are created
in the updated $\STree(\mathcal{T}_{U[1..i]})$.
Fortunately, the number of such newly created W-links are bounded by the
size of the resulting suffix tree, which is $O(N)$.
However, the number of \emph{redirected} soft W-links,
which are the same as the number of DAWG edges to be redirected,
can be too numerous to be done 
within our desired bounds as the next lemma shows.

\begin{lemma} \label{lem:Weiner_lower_bound}
  Weiner's suffix tree algorithm % and Blummer et al.'s DAWG algorithm 
	takes $\Theta(N \min(K, \sqrt N))$ time
  in the fully-online setting, where $N$ is the total length of the $K$ texts.
  Hence, for $K=\Theta(\sqrt N)$ it also takes $\Theta(N \sqrt{N})$ time
  to explicitly maintain the soft W-links
  (equivalently, the DAWG secondary edges)
  in the fully-online setting.
  The lower bound holds for a constant alphabet.
\end{lemma}

\begin{proof}
  To show that these bounds hold for constant alphabets,
  we here assume that each text in the collection terminates  
  with the same end-marker $\$$.
  However, in our collection of texts each text will be distinct,
  so that each $T_k \$$ will be represented by a unique handle leaf.
%  This means that each $k$-th text is represented by the handle leaf $T_k \$$
%  during the computation.
% Might be best to describe all definitions in a way that doesn't require
% the terminator like the DAWG paper. So handles are not leafs, but internal nodes
% that might have outdegree = 1.
  
  First, we consider a lower bound.
  Consider the following $K$ right-to-left texts
  $\mathcal{T} = \{T_k = a^k \$ \mid 1 \leq k \leq K\}$
  where $a \in \Sigma$ and each text terminates with a common end-marker $\$$.
  Suppose we have constructed the suffix tree of $\mathcal{T}$ in any order.
  Then, we prepend a new character $c \in \Sigma$, such that 
  $c \neq a$, to each text $T_k=a^k\$$ in decreasing order of their length,
  $k = K, \ldots, 1$.
  Since we process each text in decreasing order of $k$,
  there are $\Omega(k)$ explicit nodes in the path
  from the handle leaf for $T_k=a^k\$$ to its lowest ancestor $r = \varepsilon$
  (the root) for which hard W-link $\W{c}{r}$ is defined.
  Hence, it takes $\Omega(k)$ time to na\"ively walk up this path.
  Also, with the exception of the first longest text $T_K$ that
	introduces $\Omega(k)$ new soft W-links, for all other $k<K$,
	there are $\Omega(k)$ soft W-links to be redirected along the way.
  Thus, there are $\Omega(K^2)$ edge re-directions in total, for all $k$'s.
  %
  %  We then repeat the above procedure $K$ times.
  %  At each repetition $i$~($1 < i \leq K$),
  We then repeat the above procedure several times.
  At each repetition $i$~($i > 1$),
  for each $k$ in decreasing order
  it again takes $\Omega(k)$ time to walk up from the handle leaf for $c^{i-1}a^k\$$
  until reaching its lowest ancestor $r$ for which hard W-link $\W{c}{r}$
  is defined.
  Also, there are $\Omega(k)$ soft W-links to be redirected along the way.
  Thus, at each repetition $i$,
  it takes a total of $\Omega(K^2)$ time for all $k$'s, too,

  Let $N$ be the total length of the texts in the collection
  after performing the above procedure several times.
  The initial total length of the text collection
  $\mathcal{T} = \{a^k \$ \mid 1 \leq k \leq K\}$ is $\frac{K(K+3)}{2}$.
  We then append $c$'s to each of the $K$ texts,
  and the text collection of total length finally becomes $N$.
  Hence, the number of iterations is $(N - \frac{K(K+3)}{2})/K = \Theta(N/K - K)$,
  which is $\Theta(N/K)$ in the case where $K < \alpha \sqrt{N}$ with
  some constant $\alpha$.
  Since each iteration requires re-directions of $\Omega(K^2)$ soft W-links,
  it takes a total of $\Omega(NK)$ time in this case.
  Now consider the case where $K > \alpha \sqrt{N}$.
  In this case, we can apply the same procedure as above
  only to $\alpha \sqrt{N}$ texts in the collection,
  and the other $K - \alpha \sqrt{N}$ texts remain empty.
  This leads to $\Omega(N \sqrt{N})$ total work for re-directing
  soft W-links.
  Combining these two,
  we obtain an $\Omega(N \min(K, \sqrt{N}))$ lower bound.
  
%  Let $\mathcal{T}' = \{c^K a^k\$ \mid 1 \leq k \leq K\}$
%  be the resulting collection of the texts
%  and let $N$ be the total length of the texts in $\mathcal{T}'$.
%  We have that $N = \Theta(K^2)$ and $K$ can be as large as $K = \Theta(\sqrt{N})$.
%  Therefore, at each repetition,
%  it takes $\Omega(K^2) = \Omega(N)$ time to walk up from the leaves
%  for all the texts.
%  Hence, it requires $\Omega(KN)$ total time,
%  which is $\Omega(N \sqrt{N})$ for any $K = \Theta(\sqrt N)$.

	To see that this lower bound actually gives rise to the worse case in Weiner's algorithm,
	we can focus only on the time required for soft W-link re-direction, 
	since new edge insertions and node insertions are always accounted globally to be the total size of the the suffix tree, which is $O(N)$.
	
	Recall that the number of soft W-link re-directions when appending a symbol $a$ to text $T_k \$$ is
	no larger than the suffix tree depth of the handle leaf representing $T_k \$$, which is in turn smaller
	than the length of $T_k \$$. 
	Also, the depth of the new leaf $a T_k \$$ is at most one more than
	the depth of leaf $T_k \$$ minus the number of edge re-directions that reduce depth
	of the current handle leaf associated with each of the $K$ text,
	while the depth of all current handle leaves $T_i \$$, $i \neq k$, 
	may also increase by at most one while updating $T_k \$$, by the insertion of the internal node
	off which the leaf $T_k \$$ is hanging above the handle leaf of $T_i$. 
	Thus, each of the $O(N)$ symbols may increase by at most one the depth of all the $K$ handle leaves.
	This depth increase was not an issue in the semi-online
	setting since previous $T_k \$$ are no longer updated and their handle leaves were no longer used.
	In the fully-online setting, this depth increase is problematic.
  The depth reduction argument gives an obvious $O(N)$ upper bound on the
	soft W-link re-directions while updating each of the $K$ texts, which adds up to $O(KN)$ overall upper bound.
	
	The analysis will separate those short texts $T_k \$$, such that $|T_k \$_k| \leq \sqrt N$ from the longer texts.
	For the short texts, each time a symbol is prepended to a text $T_k \$$, the number of soft W-link edge re-directions is bounded
	by the length of each short text, which is at most $\sqrt N$, totaling at most $O(N \sqrt N)$ such re-directions.
	For the long texts, we observe that there are at most $O(\sqrt N)$ such long texts, and for each specific
	text, the total number of soft W-link edge re-directions is at most $O(N)$, totaling at most $O(K N) \subseteq O(N \sqrt N)$. 
	Combining these bounds, we get the desired $\Theta(N \min(K, \sqrt N))$ tight bound.
\end{proof}

\begin{remark}
  To show that the bounds hold for a constant alphabet,
  we used the same end-marker $\$$ for all the texts
  in the proof of Lemma~\ref{lem:Weiner_lower_bound}. 
  We remark that the same arguments hold for
  the case where each text $T_k$ is terminated with a unique end-marker $\$_k$,
  as we assume elsewhere in this paper,
  since also in this case each text $T_k \$_k$ is represented by
  a unique handle text.
  We then use $K+2$ characters in the lower bound example
  (the alphabet is $\{a, b, \$_1, \ldots, \$_K\}$).
%	It holds also for no end-marker. Might be better to switch to such DAWG
%	notation without end markers at all?
\end{remark}

%\begin{remark}
%  The lower bound in Lemma~\ref{lem:Weiner_lower_bound}
%  refutes the claim in the previous version of this paper~\cite{TakagiIA16}
%  that the DAWG for left-to-right fully-online texts
%  can be explicitly maintained in $O(N \log \sigma)$ time.
%  This mistake happened because 
%  the costs for edge redirections were not considered there.
%  Still, in the sequel
%  we will show how to maintain in $O(N \log \sigma)$ time
%  and $O(N)$ space a representation of the DAWG
%  for left-to-right fully-online texts which answers
%  DAWG edge queries in $O(\log \sigma)$ time.
%\end{remark}

To avoid the above-stated super-linear cost in Lemma~\ref{lem:Weiner_lower_bound},
we shall only maintain hard W-links and will not explicitly
maintain soft W-links.
Instead of soft W-links we will maintain only the Boolean indicator
$\sw_{a}(v)$ that tells us whether a (soft or hard) W-link $\W{a}{v}$ is defined
or not.
%maintain Boolean indicator variables that tell us whether such
%soft W-links exist.
%Once set, such Boolean indicator variables remain set and
%do not need to be updated like the soft W-links would have to be redirected.
Once $\sw_{a}(v)$ is set to $1$,
it remains $1$ and does not need to be updated 
even when the corresponding soft W-link would have to be redirected.

\begin{figure}[t]
  \centerline{
    \includegraphics[width=1.0\linewidth]{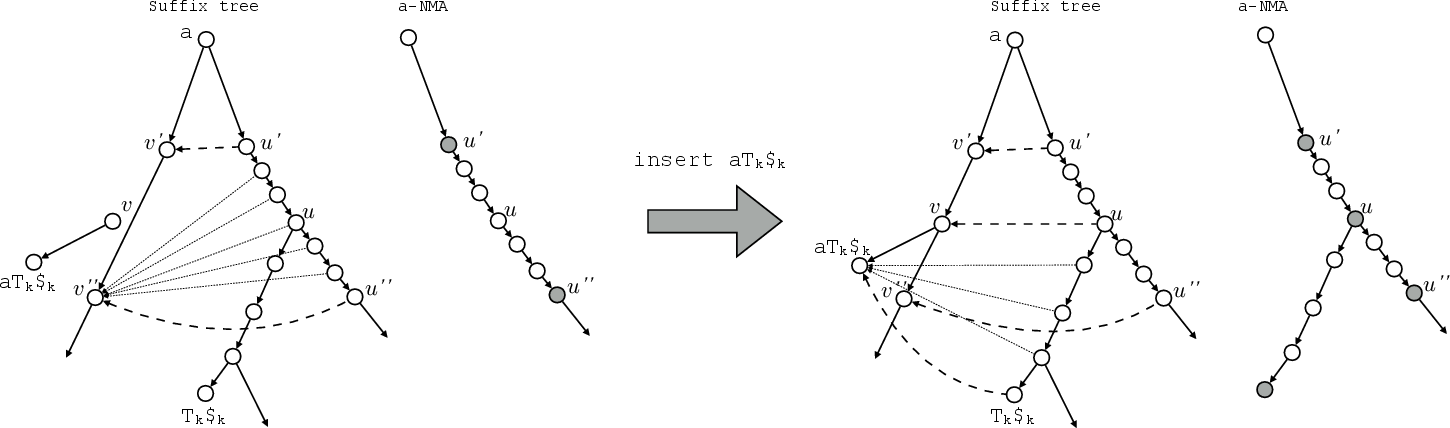}
  }
  \caption{
    Extending the text $T_k\$_k$ to $a T_k\$_k$.
    Soft W-links are shown short-dashed and hard W-links are shown long-dashed.
    Gray nodes of NMA data structures  mean marked nodes. 
    Nodes with new hard W-link (of added character $a$) in the suffix tree are added to the NMA as marked nodes.
    Nodes with new soft W-link (of $a$) are added to the NMA as normal nodes.
  }
  \label{fig:NMA}
\end{figure}

Like in the semi-online setting,
we here also go up from the leaf $\ell$ representing $T_k\$_k$
to its lowest ancestor $u'$ for which $\W{a}{u'}$ is defined.
The cost for walking up to the lowest ancestor $u$ of $\ell$
for which $\sw_{a}(u) = 1$ can be charged to the cost for
creating new soft W-links (or equivalently, that for creating new corresponding DAWG edges),
which is amortized constant per added character $a$.
One problem remains:
We would like to skip all explicit nodes $s'$ in the path from node $u$ to $u'$,
since na\"ively walking up this path can be as costly as
redirecting W-links $\W{a}{s'}$ for all such nodes $s'$.
In so doing, 
we shall also maintain for each character $\sigma$ an NMA data structure
of Lemma~\ref{lem:nma} on the subtree of the suffix tree which consists of
the two following disjoint sets of nodes:
(1) the set of unmarked nodes
$v$ such that $\sw_{a}(v) = 1$ and $\W{a}{v}$ is a soft W-link, and
(2) the set of marked nodes $v$ such that $\W{a}{v}$ is a hard W-link.
Our version of Weiner's algorithm will na\"ively
walk up the suffix tree from the leaf $\ell$ representing $T_k\$_k$
until the lowest node $u$ such that $\sw_{a}(u) = 1$,
and from there it will jump to $u'$ using the NMA data structure
for the prepended character $a$.
In what follows, 
we will denote this as the $a$-NMA data structure.

\begin{theorem} \label{theo:fully-online_weiner}
Given a fully-online sequence $U$ of $N$ update operators
for a collection of $K$ right-to-left texts $\mathcal{T}$,
our version of Weiner's algorithm can update the suffix tree
in a total of $O(N \log \sigma)$ time and $O(N)$ space.
\end{theorem}

\begin{proof}
  The correctness of our algorithm should be clear from the above discussion.

  Let us analyze the time complexity.
  The algorithm will now still climb up the suffix tree
  from the currently focused leaf $\ell$
  up to its lowest ancestor $u$ with $\sw_{a}(u) = 1$.
  From there, it would jump
  to its nearest ancestor $u'$ of $u$ having hard W-link $\W{a}{u'}$ defined
  in constant amortized time using an NMA query on the $a$-NMA data structure.
  Now we update the $a$-NMA data structure.
  If the insertion point $v$ for the new leaf $\ell'$ representing $aT_k\$_k$
  is newly created (see Case (1) in the previous sub-section),
  then the soft W-link $\W{a}{u}$ becomes hard.
  Hence, we mark node $u$ in the $a$-NMA data structure.
  Otherwise, the W-link $\W{a}{u}$ is already hard and hence
  $u$ is already marked in the $a$-NMA data structure.
  Recall that each node $s$ between the leaf $\ell$ and $u$
  obtain new soft W-links and hence $\sw_{a}(s)$ is now set to $1$.
  Hence, we insert an unmarked node for each $s$
  in the $a$-NMA data structure.
  Since the NMA data structure allows us to insert a new leaf
  in amortized constant time, we insert these unmarked nodes
  in increasing order of depth,
  from the child of $u$ to the parent of $\ell$ contained in the path.
  We also spend $O(\log \sigma)$ time
  at each visited node for searching the corresponding
  NMA data structure.
  Overall, it takes a total of $O(N \log \sigma)$ time
  to construct the suffix tree for a fully-online right-to-left text collection
  of total length $N$.
%  The time spent at each node visited is $O(\log \sigma)$
%  for searching the appropriate data structures.

  Let us now analyze the space complexity.
  For each character $c \in \Sigma$,
  each marked node $u$ in the $c$-NMA data structure
  corresponds to a unique hard W-link $\W{c}{u}$.
  Also, each unmarked node $s$ in the $c$-NMA data structure
  corresponds to a unique soft W-link $\W{c}{s}$.
  Since the total number of hard and soft W-links for
  all characters $c \in \Sigma$ is $O(N)$,
  the total size of the $c$-NMA data structures for all characters
  $c \in \Sigma $ is $O(N)$.
%  these the data structures can be updated in a second pass along the path.
  %  and the number of indicator updates is bounded globally by $O(N)$.
\end{proof}

Now we turn our attention to construction of
the DAWG for a fully-online left-to-right text collection $\mathcal{S}$.
Since our version of Weiner's algorithm does not explicitly maintain
soft W-links, we do not have explicit representation of
secondary edges of the DAWG for the left-to-right texts.
However, the Weiner's suffix tree augmented with the NMA data structures
indeed is implicit representation of the DAWG secondary edges:

\begin{lemma} \label{lem:simulate_secondary}
  Using Weiner's suffix tree augmented with the NMA data structures,
  we can simulate each soft W-link per query in amortized
  $O(\log \sigma)$ time. 
\end{lemma}

\begin{proof}
  A given node $u$ has soft W-link $\W{a}{u}$
  for a given character $a$ iff $\sw_{a}{u} = 1$ and
  $\W{a}{u}$ is not a hard W-link.
  Suppose $u$ has soft W-link $\W{a}{u}$.
  We query the NMA $u'$ of $u$ in the $a$-NMA data structure.
  Let $b$ be the first character of the path label from $u'$ to $u$.
  We follow the hard W-link $\W{a}{u'} = v'$,
  and find the out-going edge of $v'$ whose edge label begins with $b$.
  Then, the child $v''$ of $v'$ below this edge is the
  destination of the soft W-link $\W{a}{u}$.
  The time for the NMA query is amortized to $O(1)$ and
  finding the appropriate $a$-NMA data structure and the
  appropriate out-going edge of $v'$ takes $O(\log \sigma)$ time each.
\end{proof}

The next corollary immediately follows from
Theorem~\ref{theo:fully-online_weiner} and Lemma~\ref{lem:simulate_secondary}.

\begin{corollary} \label{coro:fully-online_dawg}
Given a fully-online sequence $U$ of $N$ update operators
for a collection of $K$ left-to-right texts,
the DAWG can be maintained in a total of $O(N \log \sigma)$ time
and $O(N)$ space with $O(\log \sigma)$ query time 
for an out-going DAWG edge.
\end{corollary}

%\begin{proof}
%The DAWG shares the same set of nodes with the suffix tree of the reverse texts
%and the edges are exactly the W-links. Since we do not maintain the soft W-links,
%if a W-link $\W{a}{u}$ is required, it can be obtained by finding $u'$, the nearest ancestor
%with a hard W-link, following the W-link $v'=\W{a}{u'}$ and the edge down towards $v''$.
%The time for the NMA query is amortized to $O(1)$ and finding the appropriate NMA tree and the
%appropriatae edge leaving $v'$ each take $O(\log \sigma)$ time.
%\end{proof}

\section{Fully-online version of Ukkonen's suffix tree algorithm}
\label{sec:fully-online_suffix-tree}

Ukkonen's algorithm~\cite{Ukkonen95} constructs
the suffix tree of a given text in an online manner,
from left to right.
In this section, we show how Ukkonen's algorithm can be extended
to maintain the suffix tree for
a fully-online left-to-right text collection.
We will do so by first explaining that Ukkonen's algorithm can readily be
extended to the semi-online setting.
Then, we will describe some difficulties in
extending Ukkonen's algorithm to our fully-online setting,
and finally we will present how to overcome these difficulties
achieving $O(N \log \sigma)$-time algorithm.

\subsection{Semi-online left-to-right suffix tree construction}
%Ukkonen~\cite{Ukkonen95} proposed 
%an algorithm to incrementally construct the suffix tree of a single text.
Ukkonen's algorithm~\cite{Ukkonen95} can easily be extended to incrementally
construct the suffix tree for multiple texts in the semi-online setting.

Let $U$ be a semi-online sequence of $N$ update operators
such that the last update operator for each $k$~($1 \leq k \leq K$)
is $(k, \#_k)$, where $\#_k$ is a special end-marker for the $k$th text in the collection.
%For ease of notation,
%let $\mathcal{S} = \mathcal{S}_{u[1..i-1]}$ and $\mathcal{S}' = \mathcal{S}_{u[1..i]}$.
Also, assume that we have already constructed $\STree(\mathcal{S}_{U[1..i-1]})$
and that the next update operator is $U[i] = (k, a)$.
Thus a new character $a$ is appended to the text $S_k$ 
and it becomes $S_ka$.

In updating $\STree(\mathcal{S}_{U[1..i-1]})$ to $\STree(\mathcal{S}_{U[1..i]})$,
we have to assure that all suffixes of the extended text $S_ka$ will be represented by $\STree(\mathcal{S}_{U[1..i]})$.
These suffixes are categorized to three different types: 
\begin{description}
 \item[Type-1] The suffixes of $S_ka$ that are longer than $\longsuf_{\mathcal{S}_{U[1..i-1]}}(S_k)a$.
 \item[Type-2] The suffixes of $S_ka$ that are not longer than $\longsuf_{\mathcal{S}_{U[1..i-1]}}(S_k)a$
               and are longer than $\longsuf_{\mathcal{S}_U[1..i]}(S_ka)$.
 \item[Type-3] The suffixes of $S_ka$ that are not longer than $\longsuf_{\mathcal{S}_{U[1..i]}}(S_ka)$.
\end{description}

The suffixes of $S_ka$ are inserted in decreasing order of length.

The Type-1 suffixes are maintained as follows.
Let $s$ be any suffix of $S_k$ which is represented by a leaf of $\STree(\mathcal{S}_{U[1..i-1]})$.
Since $s$ is a non-repeating suffix of $S_k$ in $\mathcal{S}_{U[1..i-1]}$,
$sa$ is a non-repeating suffix of $S_ka$ in $\mathcal{S}_{U[1..i]}$,
which implies that $sa$ will also be a leaf of $\STree(\mathcal{S}_{U[1..i]})$.
Based on this observation, 
the label of the in-coming edge of the leaf is represented 
by a pair $\langle k, b \rangle$ called an \emph{open edge},
where $b$ is the beginning position of the label of the in-coming edge in 
the $k$th text.
We can retrieve the ending position of the edge label
in constant time by looking at the current length of the $k$th text.
This way, every existing leaf will then be ``automatically'' extended.

Hence, updating $\STree(\mathcal{S}_{U[1..i-1]})$ to $\STree(\mathcal{S}_{U[1..i]})$
reduces to inserting the Type-2 suffixes of $S_ka$
(note that the Type-3 suffixes of $S_k a$ already exists in the suffix tree).
For this sake, the algorithm maintains an invariant which indicates
the locus of $x = \longsuf_{\mathcal{S}_{U[1..i]}}(S_k)$ 
on $\STree(\mathcal{S}_{U[1..i-1]})$ called the \emph{active point}.
Since $x$ can be an implicit node,
the algorithm maintains the canonical reference $(v, c, \ell)$ to $x$.
For convenience, if $x$ is an explicit node,
then let its canonical reference be $(x, \varepsilon, 0)$.
The update starts from the current active point $x$
represented by its canonical reference pair,
and the Type-2 suffixes of $S_ka$ are inserted in decreasing order of length,
by using the chain of (virtual) suffix links. There are two cases:
\begin{enumerate}
 \item[I.] If it is possible to go down from $x$ 
 with character $a$, then no updates to the tree topology are needed.
 The new active point is $xa$, and the reference to $xa$ is made canonical
 if necessary. The update ends.

 \item[II.] If it is impossible to go down from $x$
 with character $a$, then we create a new leaf.
 Let $j$ be the beginning position of the suffix of $S_ka$
 which corresponds to this new leaf.
 The following procedure is repeated until Case I happens.

  \begin{enumerate}
   \item If the active point $x$ is on an explicit node, then a new 
   leaf node $s$ is created as a new child of $x$,
   with its incoming edge labeled by $\langle k, b \rangle$,
   where $b = |S_ka| - |x|+1$.
%   We do not explicitly maintain the last value $|S_ka|$,
%   since it requires super-linear costs for all leaves.
%   However, we can easily retrieve this value in $O(1)$ time,
%   since we know that the leaf is associated with the $k$th text.
   The active point $x$ is updated to $\suflink(x)$.

   \item If the active point $x$ is on an implicit node, 
   then $x$ becomes explicit in this step.
   A new leaf node $s$ is created as a new child of 
   $x$ with its incoming edge labeled by $\langle k, b \rangle$.
   Since the suffix link of the new explicit node $x$ does not yet exist,
   we simulate the suffix link traversal as follows:
   Let $(v_j, c_j, \ell_j)$ be the canonical reference to $x$.
   First, we follow the suffix link $\suflink(v_j)$ of $v_j$,
   and then go down along the path of length $\ell_j$ 
   from $\suflink(v_j)$ starting with character $c_j$.
   Let this locus be $x'$.
   Let $v_{j+1}$ be the longest explicit node in this path.
   (i) If $|v_{j+1}| = |x'|$, then we firstly create 
   the new suffix link $\suflink(x) = v_{j+1}$ for the new explicit node $x$.
   The active point $x$ is updated to $x'$ 
   and is represented by canonical reference $(v_{j+1}, \varepsilon, 0)$.
   (ii) If $|v_{j+1}| < |x'|$, then the next active point is implicit.
   The active point $x$ is updated to $x'$  
   and is represented by canonical reference $(v_{j+1}, c_{j+1}, \ell_{j+1})$.
   The suffix link of $x$ will be set to $x'$
   when $x'$ becomes explicit in the next step.
  \end{enumerate}
\end{enumerate}
The most expensive case is II-b-(ii).
Since the path from $v_{j+1}$ to $x'$ contains at most $\ell_j - \ell_{j+1}$ 
explicit nodes, 
it takes $O((\ell_j-\ell_{j+1}+1)\log \sigma)$ time 
to locate the next active point $x'$ (note $\ell_j - \ell_{j+1} \geq 0$ holds).
All the other operations take $O(\log \sigma)$ time.
Hence, the total cost to insert all leaves (suffixes) for 
the $k$th text is 
$O(\sum_{j=1}^{N_k}(\ell_j - \ell_{j+1} + 1) \log \sigma) = O(N_k \log \sigma)$,
where $N_k$ is the final length of the $k$th text.
Thus the amortized time cost for each leaf (suffix) for the $k$th text 
is $O(\log \sigma)$.
Overall, it takes a total of $O(N \log \sigma)$ time to construct
$\STree(\mathcal{S}_U)$ for a semi-online sequence $U$ of
update operators. The space requirement is $O(N)$.

\subsection{Difficulties in fully-online left-to-right suffix tree construction}

The following observations suggest
that it does not seem easy to extend Ukkonen's algorithm to our left-to-right fully-online setting:
\begin{enumerate}
 \item[A.]
 \textbf{[Keeping track of active points]} Let $U[i] = (k, a)$ which updates the current $k$th text $S_k$ to $S_ka$,
 and assume that we have just constructed $\STree(\mathcal{S}_{U[1..i]})$.
 Recall that we defined the initial locus of the active point for $S_ka$ 
 on $\STree(\mathcal{S}_{U[1..i]})$ to be 
 the longest repeating suffix of $T_ka$ in $\mathcal{S}_{U[1..i]}$.
 However, since $U$ is fully-online,
 any other text $T_h$~($h \neq k$) in the collection
 would be updated by following update operators $U[r]$ with $r > i$.
 Then, the longest repeating suffix of $S_ka$ in $\mathcal{S}_{U[1..r]}$ 
 can be much longer than that of $S_ka$ in $\mathcal{S}_{U[1..i]}$. 
 In other words, some Type-1 suffixes of $S_ka$ in $\mathcal{S}_{U[1..i]}$ 
 can become of Type-2 in $\mathcal{S}_{U[1..r]}$. 
 What is worse, updating $S_h$ can affect the longest repeating suffix 
 of any other text in the collection as well.
% If we maintain all these active points na\"ively,
% it takes $O(KN \log \sigma)$ time.

 \item[B.]
 \textbf{[Canonization of active points]}
 Even if we somehow manage to efficiently maintain the active point for each text 
 in the collection, there remains another difficulty.
 Let $j$ be the beginning position of the longest repeating suffix of $S_ka$ 
 in $\mathcal{S}_{U[1..i]}$,
 and let $(v_j, c_j, \ell_j)$ be the canonical reference to this suffix.
 Let $U[i'] = (k, a')$ be the first update operator in $U$ which 
 updates the $k$th text after $U[i] = (k, a)$.
 Let $(v'_j, c'_j, \ell'_j)$ be the canonical reference
 to the longest repeating suffix of $S_ka$ in $\mathcal{S}_{U[1..i']}$,
 which is the ``real'' initial active point where insertion of
 the Type-2 suffixes should start at this $i'$th step.
 By the property of suffix trees $\ell'_j \geq \ell_j$ holds,
 and what is worse, this length $\ell'_j$ 
 is unbounded by the number of Type-2 suffixes inserted at this $i'$th step.
 Thus, it is not clear whether
 the amortization technique we used for the semi-online construction
 works in our fully-online setting.

 \item[C.]
 \textbf{[Maintaining leaf ownerships]}
 The phenomenon mentioned in Difficulty A also causes
 a problem of how to represent the labels of the in-coming edges to the leaves.
 Assume that we created a new leaf w.r.t. an update operator $(k, a)$,
 and let $\langle k, b_k \rangle$ be the pair representing 
 the label of the in-coming edge to the leaf,
 where $b_k$ is the beginning position of the edge label in the $k$th text.
 We say that the $k$th text $S_k$ is the \emph{owner} of the leaf.
 It corresponds to a Type-1 suffix of the $k$th text, 
 but the leaf can later be extended by another growing text $S_h$.
 Namely, $S_h$ can overtake the ownership of the leaf from $S_k$.
 After this happens, then the pair $\langle k, b_k \rangle$
 has to be updated to $\langle h, b_h \rangle$,
 where $b_h$ is the beginning position of the edge label in the $h$th text.
 Notice that this update may happen repeatedly.
\end{enumerate}

\subsection{Fully-online left-to-right suffix tree algorithms}

Let us now consider how to construct the suffix tree
for a fully-online left-to-right text collection.
Our fully-online version of Ukkonen's algorithm
works with the aid of the fully-online version of Weiner's algorithm
proposed in Section~\ref{sec:weiner}.
Namely, for a fully-online left-to-right text collection
$\mathcal{S}$ with $K$ texts,
we build $\STree(\mathcal{S})$ in tandem with $\STree(\mathcal{T})$,
where $\mathcal{T}$ is the set of reversed texts from $\mathcal{S}$
(i.e., $\mathcal{T} = \rev{\mathcal{S}}$).
Since we use the fully-online version of Weiner's algorithm,
as in Section~\ref{sec:weiner}, we assume that
each text in $\mathcal{T}$ terminates with a special symbol $\$_k$,
namely, $\mathcal{T} = \{T_1 \$_1, \ldots, T_K \$_K\}$.
This in turn implies that each text in $\mathcal{S}$ begins with $\$_k$,
namely, $\mathcal{S} = \{\$_1 S_1, \ldots, \$_K S_K\}$,
where $S_i = \rev{T_i}$ for $1 \leq i \leq K$.

In what follows, we will propose two alternative approaches.
Suppose we have constructed $\STree(S_{U[1..i-1]})$.
Given the $i$th update operator $U[i] = (k, a)$,
the first one called the \emph{forward approach}
traverses a chain of (virtual) suffix links in a forward manner
and inserts new leaves of the updated text $\$_k S_k a$
in decreasing order of the lengths of the suffixes of $\$_k S_k a$.
This forward approach is a direct extension of Ukkonen's original algorithm.
The second one called the \emph{backward approach}
traverses a chain of (virtual) suffix links in a backward manner and
inserts new leaves in increasing order of the lengths of the suffixes of $\$_k S_k a$.
This backward approach can be seen an extension of
Breslauer and Italiano's algorithm~\cite{BreslauerI13}
which was originally proposed for real-time suffix tree construction
for a single left-to-right text.

\subsubsection{Forward approach} \label{sec:forward}

In this subsection, we present our forward approach
to update $\STree(S_{U[1..i-1]})$ to $\STree(S_{U[1..i]})$.
The key notions in this forward approach are
\emph{swapping active points} and
\emph{tight connections between active points and leaf ownerships}.
In what follows we will explain these notions in full details.

Let us first consider maintaining active points (Point A).
This is indeed closely related to maintaining leaf ownerships (Point C).
We will for now put it aside the cost for maintaining leaf ownerships,
and will focus on describing how active points can affect
ownerships of leaves.

For a single right-to-left online text,
the suffix links of the leaves form a single path
from the longest leaf to the shortest one.
On top of them
we also consider a virtual suffix link from the shortest leaf to the active point.

We generalize the above notion to our fully-online text collection $\mathcal{S}$.
Unlike the single text case, a leaf can represent a suffix of multiple texts
in our fully-online setting.
This implies that the suffix links of $\STree(\mathcal{S})$ form a forest.
Let $F_{\mathcal{S}}$ denote this forest.
This forest is only conceptual,
namely, in our algorithms to follow we will \emph{not} explicitly maintain it.
However, the forest gives us more insights into Points A and C.
Formally,
the forest $F_{\mathcal{S}}$ is a set of maximal trees 
such that each maximal tree $\SLT$ in $F_{\mathcal{S}}$ satisfies:
\begin{itemize}
  \item the root of $\SLT$ is the locus (an implicit or an explicit internal node) of the active point of a text, 
  \item the other nodes of $\SLT$ are leaves of $\STree(\mathcal{S})$, and
  \item the (reversed) edges of $\SLT$ are suffix links of $\STree(\mathcal{S})$ (if the root of $\SLT$ is an implicit node, then the (reversed) edges from the root to its children are virtual suffix links from the children).
\end{itemize}
%For each node $v$ of $\SLT$ (i.e., $v$ is a leaf of $\STree(\mathcal{S})$),
%let $A(v)$ denote the set of the id's of the texts
%of which the active points lie on $v$.

%As will be shown later in Lemma~\ref{lem:leaf_ownership_lowerbound},
%it is too costly to strictly maintain the ownerships of leaves
%in our fully-online setting.
%This also makes it difficult to correctly maintain the active points
%of the texts at any time.
%Hence, we perform a lazy maintenance of the ownerships of leaves
%and the active points of texts 
%such that the owner of each leaf in $\STree(\mathcal{S})$
%is either
Since a leaf of $\STree(\mathcal{S})$ can be a suffix of
multiple texts, there are multiple choices for the owner of each leaf.
Our choice of the owner of a leaf is either
\begin{enumerate}
  \item[(R1)] the text that created the leaf, or
  \item[(R2)] the last text whose active point has extended the leaf.
\end{enumerate}
Regarding Rule (R2) above,
we will soon describe in more details how the active point of a text
can extend an existing leaf.

%This way of maintaining the ownerships of leaves
%implies the following:
%Let $s$ be a non-root node of a suffix link tree $\SLT$ in $F_{\mathcal{S}}$.
%and $u$ the nearest ancestor of $s$ in $\SLT$
%such that $s \neq u$ and $A(u) \neq \emptyset$,
%and $v$ be the child of $u$ that is an ancestor of $s$.
%Then, all nodes in the path from $s$ to $v$ have the same owner text,
%and the id of this owner text is in $A(u)$.
%See Figure HOGEHOGE for an illustration of $F_{\mathcal{S}}$,
%the active points of texts, and the ownerships of the leaves.

Suppose that we have constructed $\STree(\mathcal{S}_{U[1..i-1]})$
and that we are given an update operator $U[i] = (k, a)$
which appends new character $a$ to text $\$_k S_k$.

If the active point of $\$_kS_k$  is not on
a leaf of $\STree(\mathcal{S}_{U[1..i-1]})$,
then the suffix tree is updated as in the semi-online setting
and there are no changes on the ownerships of the leaves.
Hence, in what follows we consider the case where
the active point of $\$_k S_k$ is on a leaf of $\STree(\mathcal{S}_{U[1..i-1]})$.

Let $s$ be the leaf of $\STree(\mathcal{S}_{U[1..i-1]})$
where the active point of the text $\$_k S_k$ lies.
Let $\SLT$ denote the suffix link tree in $F_{S_{U[1..i-1]}}$
that contains this node $s$,
and let $P_i$ be the path from $s$ to the root of $\SLT$.
Also, let $\mathcal{O}_i$ be the set of texts which are the owners of
the suffix tree leaves in $P_i$.
Finally, let $L_i$ be the list of all nodes $u$ in the path
from the \emph{parent} of $s$ to the root of $\SLT$
such that the active point of some text in $\mathcal{O}_i$ lies on $u$.
For each $1 \leq x \leq m = |L_i|$, let $u_x = L_i[x]$.
For convenience, let $u_0 = s$.
For each $1 \leq x \leq m$, let $k_x$ denote the text id of the owner of $u_x$.
Then, due to the way how the ownerships of leaves are defined
by Rules (R1) and (R2) above,
for every $1 \leq j < m$ the owner of every leaf
between $u_{j-1}$ and $u_{j}$ is the $k_{j}$th text in the collection.
%Then, the ownerships of the suffix tree leaves in
%each suffix link tree $\SLT$ are as follows:
%Then, for every $1 \leq i < m$,
%the owner of every leaf in $P_s$ between $u_{i-1}$ and $u_{i}$ 
%is the $k_{i}$th text in the collection.
%Notice that this coincides with the leaf ownership
%defined by (1) and (2) above.
See also the left diagram of Figure~\ref{fig:swap_active_points} for illustration.

Now we describe how the ownerships of leaves
and the active points of texts can change
when a new character is appended to a text in the fully-online setting.
We begin with the first node $u_1 = s$ in the list $L_i$
whose current owner is text $\$_{k_1}S_{k_1}$. 
See also the left diagram of Figure~\ref{fig:swap_active_points}.
%Let $z = \str(u_1)$.
Since $\$_{k} S_{k}$ now gets extended to $\$_{k} S_{k} a$,
the active point of this text \emph{extends} the suffix tree leaf $u_1$.
Then, the extended leaf $u_1$ no more represents a suffix of
its original owner $\$_{k_1} S_{k_1}$.
This implies that the new owner of this suffix tree leaf $u_1$ is $\$_k S_k a$.
%Hence, the previous label $\langle k_1, b \rangle$
%of the edge leading to $u_1$ becomes $\langle k, b \rangle$
%(Note that for the new owner $\$_k S_ka$
%we can use the same beginning position $b$ as the previous owner $\$_{k_1} S_{k%_1}$).
The same happens to all leaves in the path up to $u_1$.
Then, we \emph{swap} the active points of texts $\$_kS_k$ and $\$_{k_1}S_{k_1}$.
We continue the same procedure recursively
for the other nodes $u_2$, \ldots, $u_m$ in the list $L_i$,
and finally the new owner of each leaf in the path $P_i$ becomes
the updated $k$th text $\$_kS_ka$.
After reaching the root of $\SLT$,
we possibly create new edges labeled with $a$
%(which is encoded by $\langle k, |\$_kS_ka| \rangle$)
following virtual suffix links,
and finally arrive at the new locus of the active point for
the updated $k$th text $\$_kS_ka$.
This operation may split the original suffix link tree $\SLT$
into some smaller suffix link trees (see also Figure~\ref{fig:swap_active_points}).

\begin{lemma}
  The above procedure correctly maintains
  the active points of texts
  and the leaf ownerships under Rules (R1) and (R2).
\end{lemma}

\begin{proof}
  It is clear that the above procedure
  correctly maintains the leaf ownerships under Rules (R1) and (R2).

  Let $\$_{k_j} S_{k_j}$ be any text in $\mathcal{O}_i$.
  After swapping the active points of $\$_k S_k a$
  and those texts in $\mathcal{O}_i$,
  the locus of the active point of $\$_{k_j} S_{k_j}$
  is one character above the suffix tree leaf (say $u$)
  that has just been extended by $\$_k S_k a$.
  By the definition of list $L_i$,
  this leaf $u$ \emph{before extension} was the longest leaf whose previous
  owner was $\$_{k_j} S_{k_j}$.
  Hence, the string depth of the new active point of $\$_{k_j} S_{k_j}$
  is at least $|u|-1$.
  Also, it cannot be larger than $|u|-1$,
  since otherwise
  it contradicts with the definitions of $\mathcal{O}_i$ and $L_i$
  (see also Figure~\ref{fig:swap_active_points}).
  Hence, the above procedure of swapping active points
  correctly maintains the active points of the texts in the collection.
\end{proof}

\begin{figure}[h!t]
  \centerline{
    \includegraphics[scale=0.31, angle=270]{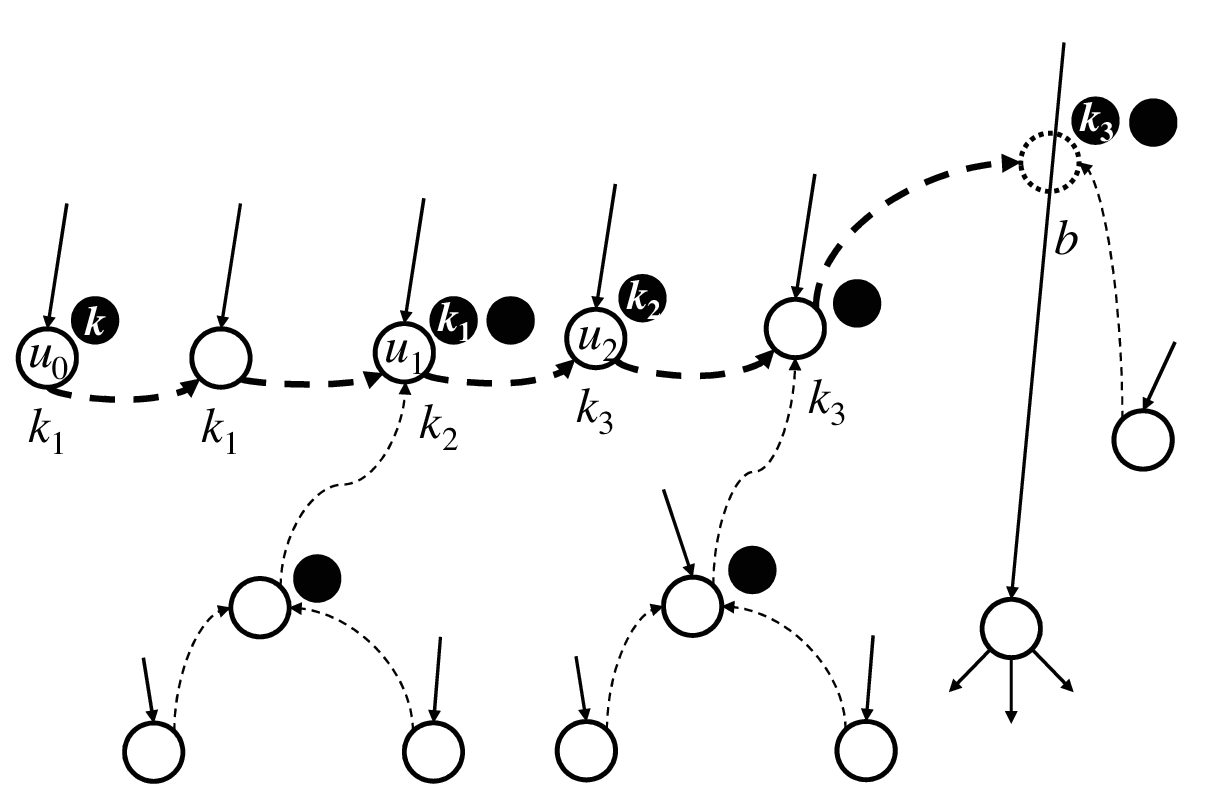}
    \hfill
    \includegraphics[scale=0.31, angle=270]{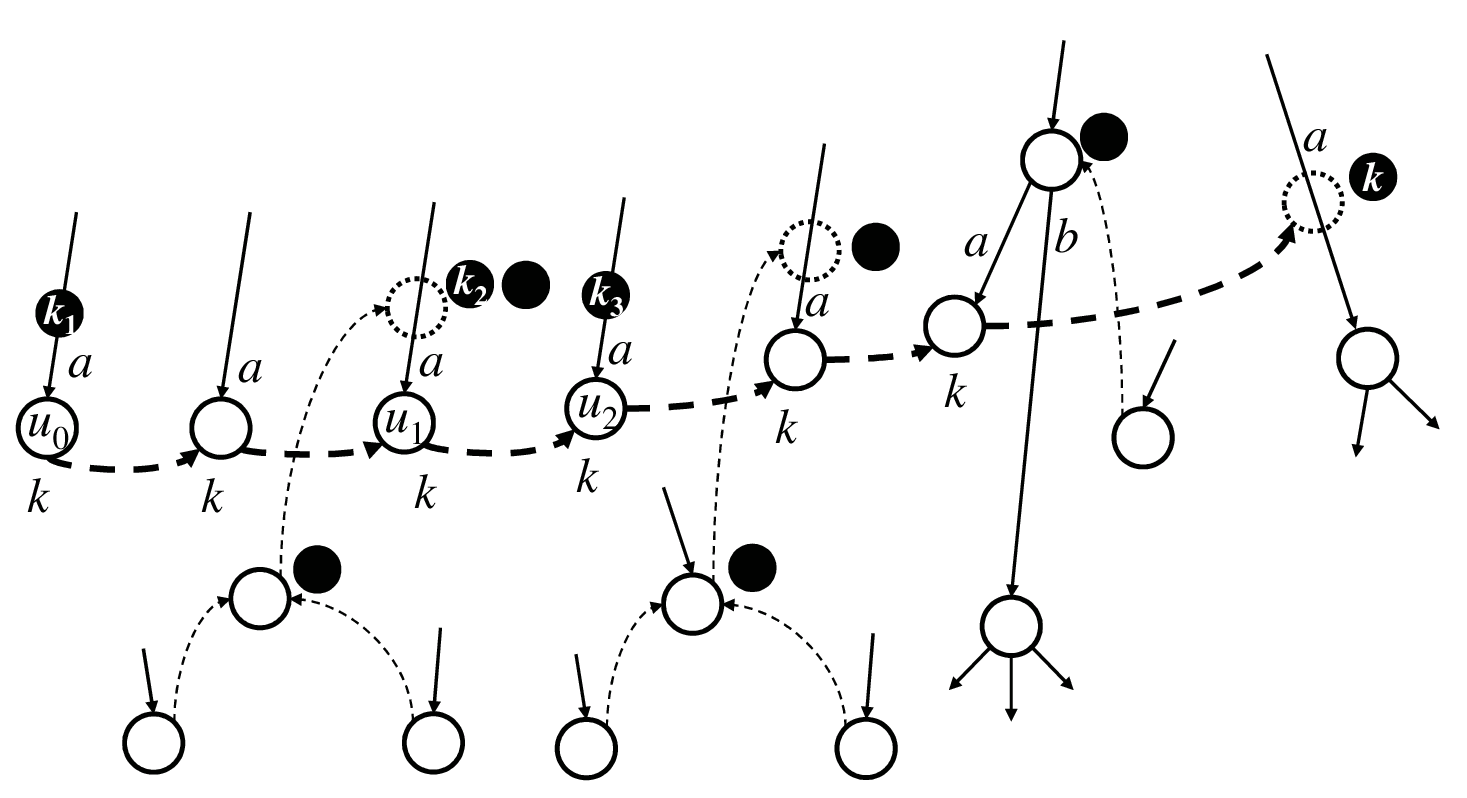}
  }
  \vspace{-0.2cm}
  \caption{The left diagram depicts a suffix link tree in the forest before the $i$th update with update operator $U[i] = (k, a)$, where the solid and broken arrows respectively represent suffix tree edges and suffix links, and the white and black circles respectively represent suffix tree leaves and active points. The dotted circle represents the root of the suffix link tree (which is an implicit node in this case). The suffix link path $P_i$ of interest is shown with bold broken arrows, where the staring node is $s = u_0$. The integers $k_j$ below each leaf shows the current owner of the leaf, and hence $\mathcal{O}_i = \{\$_k S_k, \$_{k_1}S_{k_1} , \$_{k_2}S_{k_2}, \$_{k_3}S_{k_3}\}$. The integer $k_j$ in each black circle implies that it is the active point for text $\$_{k_j}S_{k_j}$. The black circles without text id's are the active points of texts which are not in $\mathcal{O}_i$. The right diagram shows how it looks after the text $\$_k S_k$ has been extended to $\$_k S_k a$ with a new character $a$. Since its active point has extended the leaf $u_0$ with $a$, text $\$_{k}S_{k}$ becomes the new owner of every leaf in the path $P_i$. In the meantime, we swap the active point for text $\$_kS_k a$ with the active points of texts in $\mathcal{O}_i$, in the order they appear in the path $P_i$. After the active point of text $\$_k S_k a$ and that of the last text in the path (which in this figure is $\$_{k_3} S_{k_3}$) have been swapped, we possibly create new leaves (in this figure we create just one new leaf), and eventually we find the new locus for the active point for the updated text $\$_k S_k a$. Since all the leaves in the path $P_i$ have been extended by the new character $a$, this path breaks away from the original suffix link tree. As a result, we obtain several smaller suffix link trees.}
  \label{fig:swap_active_points}
\end{figure}

Now we wish to maintain leaf ownerships
as described above.
However, the next lemma shows that
it requires super-linear cost to explicitly maintain leaf ownerships.

\begin{lemma} \label{lem:leaf_ownership_lowerbound}
  There is a left-to-right fully-online collection
  of $K$ texts of total length $N$
  for which explicitly maintaining leaf ownerships requires
  $\Omega(\frac{N^2}{K})$ time.
\end{lemma}

\begin{proof}
  Consider an initial text collection $\mathcal{S} = \{\$_1, \ldots, \$_K\}$.
  We will update this text collection in $i$ rounds
  so that in each $j$th round
  the same character $a_j$ is appended to each text.
  The order of the texts to which $a_j$ is appended is arbitrary in each round.
  Thus, after the $j$th round,
  the text collection becomes of form
  $\{\$_1a_1 \cdots a_j, \ldots, \$_K a_1 \cdots a_j\}$.
  We also assume that $a_j \neq a_h$ for any $1 \leq j \neq h \leq i$.
  This implies that in each $j$th round,
  we will have $j$ leaves representing common suffixes
  $a_1 \cdots a_j$, $a_2 \cdots a_j$, \ldots, $a_j$.  

  Notice that during the $j$th round,
  the ownership of each such leaf has to be updated $K$ times
  since each such leaf is shared by the $K$ texts.
  Therefore, the total number of updates for the leaf ownership
  after the final $i$th round is at least
  \begin{equation}
    K(1 + 2 + \cdots + i) = \frac{Ki(i+1)}{2}. \label{eqn:leaf_ownership}
  \end{equation}  
  Since $N$ is the total length of the resulting text collection
  after the $i$th round, we get $N = K(i+1)$.
  Hence, $i = \Theta(\frac{N}{K})$.
  Plugging this into equation~\ref{eqn:leaf_ownership},
  we obtain the desired lower bound $\Omega(\frac{N^2}{K})$.
\end{proof}

The above $\Omega(\frac{N^2}{K})$ lower bound requires us
a super-linear cost for explicit leaf ownership maintenance
when $K = o(N)$.
Indeed, $K = o(N)$ is the only meaningful case in our
fully-online problem:
If $K = \Theta(N)$,
then each of the $K$ texts is of constant size
and hence a na\"ive algorithm would update
the suffix tree in constant time per each text
no matter how they are updated,
resulting in an $O(N)$-time construction anyway.
Hence, in what follows, we will only consider the case
where $K = o(N)$.

Due to Lemma~\ref{lem:leaf_ownership_lowerbound},
we shall not explicitly maintain leaf ownerships
in our fully-online algorithm.
However, when swapping the active point of the
$k$th text with those of the texts in the set $\mathcal{O}_i$,
we need to know the owner of the leaf that has just been extended
by the active point of the $k$th text.
We also need to know the set $\mathcal{O}_i$ of texts
which are the owners of the leaves in the path $P_i$,
and need to know the list $L_i$ of leaves
where those active points currently lie.
For this sake we use the aid of our version of
Weiner's algorithm for fully-online right-to-left construction.
Namely, we build $\STree(\mathcal{S}_{U[1..i]})$
in tandem with $\STree(\mathcal{T}_{U[1..i]})$
for each increasing $i = 1, \ldots, N$.
For simplicity, we will call the left-to-right fully-online suffix tree
$\STree(\mathcal{S}_{U[1..i]})$ as the \emph{Ukkonen tree}
and the right-to-left fully-online suffix tree $\STree(\mathcal{T}_{U[1..i]})$
as the \emph{Weiner tree}.

Below we show key observations that connect
our versions of Weiner's algorithm and Ukkonen's algorithm
in the fully-online setting.
%For ease of notation,
%again let $\mathcal{S} = \mathcal{S}_{u[1..i-1]}$ and
%$\mathcal{S}' = \mathcal{S}_{u[1..i]}$
%and that $U[i] = (k, a)$.
%Also, let $\mathcal{T} = \mathcal{T}_{u[1..i-1]}$ and
%$\mathcal{T}' = \mathcal{T}_{u[1..i]}$.
For each node $v$ of the Weiner tree,
let $\wdeg(v)$ denote the number of (soft or hard) W-links from $v$,
namely, $\wdeg(v) = |\{c \in \Sigma \mid \sw_c(v) = 1\}|$.

\begin{lemma} \label{lem:connecting_Weiner_Ukkonen}
  Let $u$ be any leaf in the list $L_i$ of the Ukkonen tree
  $\STree(\mathcal{S}_{U[1..i-1]})$.
  Then, there exists an explicit node $v$ of the Weiner tree
  $\STree(\mathcal{T}_{U[1..i-1]})$ such that
  (1) $v = \rev{u}$,
  (2) $v$ is in the path from the root to the leaf representing $T_k\$_k$, and
  (3) $\wdeg(v) = 0$.
\end{lemma}

\begin{proof}
  Since $u$ is a leaf of the Ukkonen tree $\STree(\mathcal{S}_{U[1..i-1]})$,
  it is a suffix of the text $\$_k S_k$ to which a new character $a$ will be appended.
  Hence $v = \rev{u}$ is a prefix of the reversed text $T_k\$_k$,
  and is located on the path from the root to the leaf $T_k\$_k$
  in the Weiner tree $\STree(\mathcal{T}_{U[1..i-1]})$.
  By the definition of the list $L_i$,
  the active point of some other text (say $\$_h S_h$, with $h \neq k$)
  lies on the leaf $u$ in the Ukkonen tree,
  which implies that $u$ is the longest suffix of $\$_h S_h$
  that occurs at least twice in the left-to-right collection.
  Since each left-to-right text begins with a distinct $\$$ symbol,
  there must be at least two distinct characters that immediately
  precede occurrences of $u$.
  This in turn implies that there are at least two distinct characters
  that immediately follow occurrences of $v = \rev{u}$
  in the right-to-left text collection,
%  Also, $u$ is a proper suffix of these texts
%  because each left-to-right text begins with a distinct symbol.
%  This in turn implies that $v = \rev{u}$ is a proper prefix of
%  at least two texts in the right-to-left text collection.
  and hence $v = \rev{u}$ is an explicit node in the Weiner tree.
  To prove (3) assume on the contrary that $\wdeg(v) > 0$,
  and let $c$ be any character such that $\sw_c(v) = 1$.
  Since $cv = c\rev{u}$ is a substring of some text
  in the right-to-left collection $\mathcal{T}_{U[1..i-1]}$,
  $uc$ is a substring of some text in the left-to-right collection
  $\mathcal{S}_{U[1..i-1]}$.
  However, this contradicts that $u$ is a leaf of
  the Ukkonen tree $\STree(\mathcal{S}_{U[1..i-1]})$.
  Hence $\wdeg(v) = 0$.
\end{proof}

As was shown in Section~\ref{sec:weiner},
when we update the Weiner tree $\STree(\mathcal{T}_{U[1..i-1]})$
to $\STree(\mathcal{T}_{U[1..i]})$
with update operator $U[i] = (k, a)$ which prepends character $a$ to text $T_k \$_k$,
we walk up from the leaf $T_k \$_k$ until
finding the first node with a (soft or hard) W-link w.r.t. $a$ defined.
Since the total cost of walking up these paths for all characters
prepended to the right-to-left texts
is linear in the final total length $N$ of all texts,
the number of nodes in the list $L_i$ for $1 \leq i \leq N$
is also linear in $N$.

Notice that not every explicit node $v$ with $\wdeg(v) = 0$
in the path from the leaf $T_k \$_k$ to the root of the Weiner tree
corresponds to a leaf in the list $L_i$ on the Ukkonen tree.
However, as was shown above,
we can afford to check each such explicit node $v$ in total linear time.

The next lemma shows how to maintain correspondence
between these nodes in the Weiner tree and the Ukkonen tree.

\begin{lemma} \label{lem:maintain_correspondence}
We can maintain correspondence between
each node $v$ of the Weiner tree with $\wdeg(v) = 0$
and its corresponding leaf $u$ in the Ukkonen tree
in $O(N \log \sigma)$ total time.
\end{lemma}

\begin{proof}
  Let $v$ be any node of the Weiner tree 
  $\STree(\mathcal{T}_{U[1..i-1]})$ with $\wdeg(v) = 0$.
  Suppose we have maintained correspondence between
  $v$ and its corresponding leaf $u$ in the Ukkonen tree
  $\STree(\mathcal{S}_{U[1..i-1]})$.
  This correspondence is maintained by bidirectional links
  between the two trees.
  
  Now suppose we are given an update operator
  $U[i] = (k, a)$ which appends a new character $a$ to $\$_k S_k$
  and prepends $a$ to $T_k \$_k$.
  There are three cases to consider.

  \begin{itemize}
  \item[(a)]
    If the active point of the $k$th left-to-right text
    extends a leaf of the Ukkonen tree:
    In this case,
    as was described previously and was illustrated in Figure~\ref{fig:swap_active_points},
    the leaves in the path $P_i$ get extended
    by the new character $a$ that was
    appended to the $k$th left-to-right text $\$_k S_k$.
    This implies that $v$ in the updated Weiner tree
    $\STree(\mathcal{T}_{U[1..i]})$ does not correspond to
    a leaf in the updated Ukkonen tree $\STree(\mathcal{S}_{U[1..i]})$.
    Thus, we remove the bidirectional link that connects $v$ and
    the corresponding leaf in the Ukkonen tree.

  \item[(b)]
    If the active point of the $k$th text catches up a leaf $u$ of the Ukkonen tree:
%    In this case, $a$ is the last character of the string represented by $u$
 %   and let $u = ya$.
    Since $u$ is a leaf whose current
    owner is another text $\$_h S_h$ with $h \neq k$,
    $u$ is a suffix of at least two distinct left-to-right texts
    in the updated collection $\mathcal{S}_{U[1..i]}$.
    Hence, 
    $\rev{u}$ is a prefix of at least two distinct right-to-left texts
    in the updated collection $\mathcal{T}_{U[1..i]}$,
    and hence is represented by an explicit node in the
    updated Weiner tree $\STree(\mathcal{T}_{U[1..i]})$.
    Let $v$ be this explicit node.
    Moreover, since $u$ is the locus of the active point of $\$_k S_k a$,
    $u$ is the longest repeating suffix of $\$_k S_k a$
    and hence $v = \rev{u}$ is the longest repeating prefix of $a T_k \$_k$.
    This node $v$ is exactly the insertion point of
    the new leaf $a T_k \$_k$ in the Weiner tree.
    Hence, we can find the locus of $v = \rev{u}$
    during the updates of the Weiner tree
    and can easily create a bidirectional link between $v$ and $u$.

  \item[(c)] Otherwise, there are no changes in the correspondence
    and hence no maintenance of bidirectional links is needed.
  \end{itemize}

  In both cases (a) and (b), the costs can be charged to
  the construction of the Weiner tree which takes total $O(N \log \sigma)$ time.
\end{proof}

In Lemmas~\ref{lem:connecting_Weiner_Ukkonen} and~\ref{lem:maintain_correspondence} we have shown how to efficiently find those suffix tree leaves
in the list $L_i$ of the Ukkonen tree with the aid of the Weiner tree.
What remains is how to find each text in the set $\mathcal{O}_i$
of owners of the leaves in the list $L_i$.
The next lemma shows yet another application of the Weiner tree
for this purpose.

\begin{lemma} \label{lem:leaf_ownership_Ukkonen}
With the aid of the Weiner tree,
we can find the owner of each leaf in the list $L_i$
in total $O(N \log \sigma)$ time for all $1 \leq i \leq N$.
\end{lemma}

\begin{proof}
  In each internal explicit node of the Weiner tree,
  we store the id of the text which created
  the oldest leaf in the subtree rooted at this internal explicit node.
  This can be easily maintained in $O(1)$ time per node:
  When we split an edge and create a new internal node,
  then we simply copy the text id stored in its unique child.

  Consider any update operator $U[i] = (k, a)$.
  Let $u$ be any leaf in the list $L_i$ of the Ukkonen tree
  and let $v$ be its corresponding node in the Weiner tree
  (hence $v = \rev{u}$ and it is an explicit node due to Lemma~\ref{lem:connecting_Weiner_Ukkonen}).
  Then, if the text id stored in $v$ is $h$,
  then the $h$th text is the current owner of the leaf $u$ in the Ukkonen tree.
  This is true in either case
  where the leaf $u$ was created by the $h$th text and
  has never been extended by an active point,
  or the leaf $u$ was last extended by the $h$th text.
  In both cases,
  the subtree rooted at $v = \rev{u}$ in the Weiner tree
  may contain leaves which correspond to suffixes of some other texts
  than the $h$th text,
  but in the Ukkonen tree
  the active points of these texts only caught up with the leaf $u$.
  Hence none of these texts is the one which created the leaf $u$,
  or the last one that has extended $u$.
  Therefore, the $h$th text is the current owner of $u$.

%  This is true because the owner of the leaf $u$ is the
%  first text which either created or extended the leaf $u$ in the Ukkonen tree,
%  and this in turn is the first text which has existed
%  below the locus of $v = \rev{u}$ in the Weiner tree.

  A careful consideration is required when the leaf $u$ gets extended
  by the active point of text $\$_k S_k$.
  Now the extended leaf represents the extended string $ua$
  and its new owner is the $k$th text $\$_k S_k a$.
  As was shown in the proof for Lemma~\ref{lem:maintain_correspondence},
  in the Weiner tree the reversed extended string $a\rev{u}$
  is represented by a new,
  different locus than the locus for $\rev{u}$.
  It is also possible that $a\rev{u}$ is on an implicit node in the Weiner tree
  at this stage, but it will become explicit when the active point
  of another text catches up the leaf $ua$ in the Ukkonen tree.
  Thus, we will be able to return the text id $k$
  as the correct answer for a leaf ownership query when the active point
  of another text extends the leaf $ua$ in future.
\end{proof}

In the above arguments we have shown that Difficulties A and C can be efficiently
resolved by swapping active points and by neglecting explicit maintenance of leaf ownerships.

Meanwhile, this lazy maintenance of leaf ownership causes
two more issues;
Suppose that the active point of some text $\$_i S_i$ lies on an edge
that leads to a leaf $u$, and that a new character $a$ has been appended
to this text.
Let $x$ be the string represented by the active point.
\begin{itemize}
\item The first question is how we can determine whether
the active point can step forward along this edge by character $a$,
or a new explicit node must be created at the locus of $x$
together with a new edge labeled with $a$.
Since we do not know the owner of the leaf $u$,
we are not able to answer the above question by a simple character comparison.
However, this can be answered again by the aid of the Weiner tree.
Recall that there is an explicit node representing
the reversed string $\rev{x}$ in the Weiner tree and
we know its locus through the updates of the Weiner tree.
Now, the active point can step forward with character $a$
if and only if the node $\rev{x}$ has a (soft or hard) W-link
for character $a$.
Hence, we can answer the above question in $O(\log \sigma)$ time.
In case where we cannot step forward with character $a$,
then we need to create a new edge leading to a new leaf.
Instead of explicitly maintaining the owner of the leaf,
we only maintain the first character $a$ of this edge label.
If the locus of the active point is on an edge,
then we create a new explicit node $u$
representing $x$ in the Ukkonen tree.
Now $u$ has two out-going edges both leading to leaves,
one of which is labeled with $a$ as was described above.
Since $x$ was on an edge,
there was a unique character, say $b$,
such that $b \neq a$ and the W-link of node $\rev{x}$ for character $b$
is defined in the Weiner tree.
Thus the other out-going edge of $u$ is labeled with $b$
in the Ukkonen tree.
Also, by storing the string depth in each active point, 
the whole label of the edge from the parent of $u$ to $u$ can be
easily determined in constant time.
Thus, we are able to eagerly maintain
the whole label of every edge leading to an internal explicit node.

\item The second question is how we can know that
the active point catches up the leaf.
In the preceding discussions, we only proved that
we can find the owner of the leaf \emph{after}
we know that the active point has caught up the leaf.
We observe that the active point catches up the leaf
if and only if the Weiner tree node $v$ representing $a\rev{x}$
is of Weiner degree zero, namely,
the W-link of node $v$ is undefined for any character.
Hence, this question can also be answered by the aid of the Weiner tree
in constant time.
\end{itemize}

The final issue in this forward approach
is how to overcome Difficulty B on the cost for
canonizing active points.
The next lemma implies that
the cost in our fully-online setting
can indeed be amortized by a simple modification
to the original amortization arguments in the semi-online setting.
%The next lemma shows our new technique to amortize this cost.

\begin{lemma} \label{lem:canonization_amortized_cost}
  The total cost for canonizing the active points for all $K$ texts
  in a left-to-right collection $\mathcal{S}$ is $O(N \log \sigma)$.
\end{lemma}

\begin{proof}
Since we swap active points, the owner of each active point
can change during the construction of the Ukkonen tree.
However, our analysis below
does not consider which text is the owner of each active point
and hence it will lead us to simple arguments.
%We ignore what the owner of each active point is,
%but only consider the locus of each active point.

Let $A$ denote any active point and let
$(u_A, c_A, \ell_A)$ denote the reference pair of $A$.
We remark that in our fully-online setting,
this reference pair may not be canonical,
since some other text can split the out-going edge of node $u_A$
whose label begins with $c_A$.
The \emph{potential} of the active point $A$ is $\ell_A$
of the string that hangs off from the explicit node $u_A$.

Suppose we have constructed $\STree(\mathcal{S}_{U[1..i-1]})$,
and that we are given the $i$th update operator $U[i] = (k, a)$
which appends new character $a$ to the $k$th text $\$_k S_k$.
Also, suppose that $A$ is the active point for $\$_k S_k$ at this stage.
Now the algorithm finds the new locus for the active point
$A$ for the updated text $\$_k S_k a$,
while possibly swapping several active points and inserting new leaves.
In this event the algorithm traverses a chain of (virtual) suffix links.
When a canonization is conducted after tracing a virtual suffix link,
then the potential $\ell_A$ decreases at least one.
Also, when the new locus of the active point $A$ is found on
the updated suffix tree $\STree(\mathcal{S}_{U[1..i]})$,
then the potential increases exactly by one with the new character $a$.
Hence, the total number of canonizations
performed for all $N$ added characters is at most $N$.

Each canonization operation requires $O(\log \sigma)$ time
to find the out-going edge whose label begins
with the corresponding character.
Hence, the total cost for canonizations for all $N$ characters is
$O(N \log \sigma)$.
%Suppose we have constructed $\STree(\mathcal{S}_{U[1..i-1]})$.
%Let $A_{i-1}$ be any active point on $\STree(\mathcal{S}_{U[1..i-1]})$,
%and let $(u(A_{i-1}), c(A_{i-1}), \ell(A_{i-1}))$ be
%the (not necessarily canonical)
%reference pair of $A_{i-1}$ on $\STree(\mathcal{S}_{U[1..i-1]})$.
%The \emph{potential} of the active point $A_{i-1}$ is the length $\ell(A_{i-1})%$
%of the string that begins with $c(A_{i-1})$ and
%hangs off from the explicit node $u(A_{i-1})$.
%
%Suppose that we are given the $i$th update operator $U[i] = (k, a)$.
%Let $A'_{i-1}$ be the active point for the $k$th
%text $\$_k S_k$ on $\STree(\mathcal{S}_{U[1..i-1]})$.
%Now the algorithm finds the locus for the active point
%$A'_{i}$ for the updated text $\$_k S_k a$,
%while inserting new leaves.
%In this event the algorithm traverses
%the suffix links of the parents of the insertion points
%of the new leaves.
%When a canonization is conducted after tracing a suffix link,
%then the potential $\ell(A'_{i-1})$ decreases at least one.
%Also, when the locus of the new active point $A'_{i}$ is found,
%then the potential increases exactly by one with the new character $a$.
%Hence, the total number of canonizations
%performed for all $N$ added characters is at most $N$.
%
%Each canonization operation requires $O(\log \sigma)$ time
%to find the out-going edge whose labal begins
%with the corresponding character.
%Hence, the total cost for canonizations for all $N$ characters is
%$O(N \log \sigma)$.
%\qed
\end{proof}

Putting the above arguments all together,
we have proven the following theorem.

\begin{theorem} \label{theo:fully-online_ukkonen_forward}
Given a fully-online sequence $U$ of $N$ update operators
for a collection of $K$ left-to-right texts $\mathcal{S}$,
our forward version of Ukkonen's algorithm can update the suffix tree
in a total of $O(N \log \sigma)$ time and $O(N)$ space.
\end{theorem}

A snapshot of left-to-right fully-online suffix tree construction is shown in Fig.~\ref{fig:suffix-tree-fully-online-snapshot},
where the $\$_i$ symbols are omitted for simplicity.

\begin{figure}[p]
  \centerline{
    \includegraphics[scale=0.72]{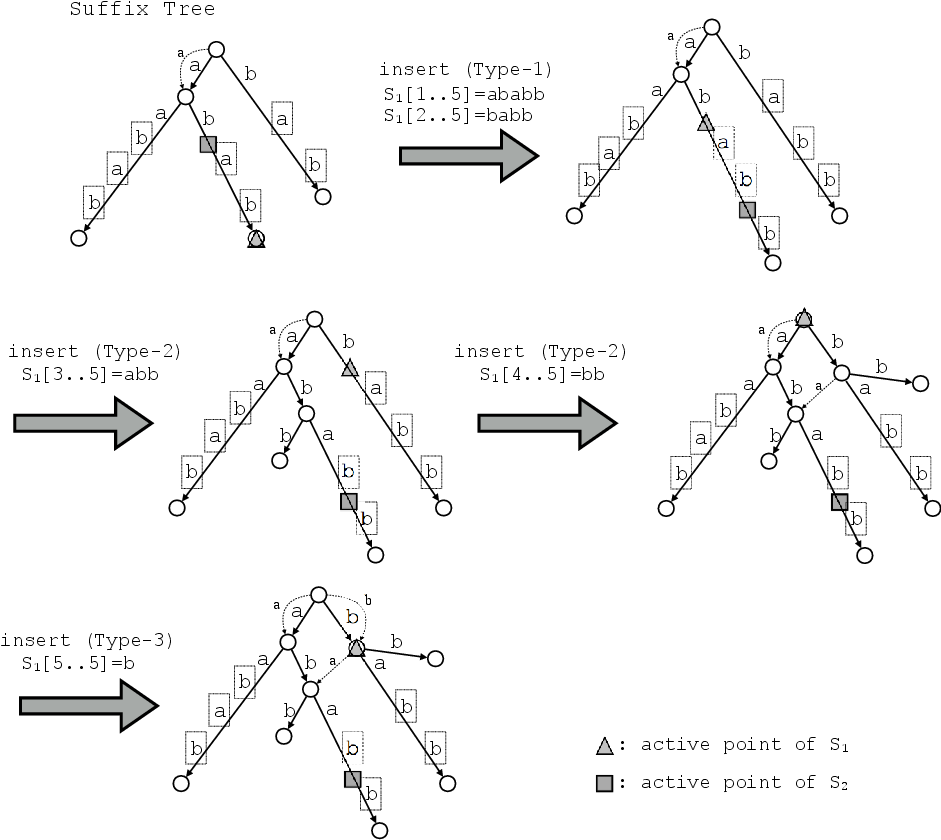}
  }
  \caption{A snapshot of left-to-right fully-online suffix tree construction in the forward approach,
  where we update $\STree(\mathcal{S})$ to $\STree(\mathcal{S'})$
  with $\mathcal{S} = \{S_1 = \mathtt{abab}, S_2 = \mathtt{aabab}\}$
  and $\mathcal{S'} = \{S_1\mathtt{b}, S_2\}$
  (here the terminate symbols $\$_1$ and $\$_2$ are omitted for simplicity).
  %% Recall that we employ lazy maintenance of the leaves,
  Recall that we employ lazy maintenance of leaf ownership,  
  and hence each character within a box is only imaginary and is not computed
  during the updates.
  Due to lazy representation of leaves, %% 
  we do nothing to insert the Type-1 suffixes of $S_1 \mathtt{b}$.
%  We insert the suffixes of $T_1 \mathtt{b}$ as follows.
%  Type-1 suffixes $\mathtt{ababb}$ and $\mathtt{babb}$
%  are inserted
%  by representing the labels of the in-coming edges of the leaves
%  with the corresponding paths in $\LPT(\mathcal{T})$
%  using Lemma~\ref{lem:leaf_label}.
  %% We first  access the active point which represents the longest Type-2 suffix $\mathtt{abb}$.
  The active point of $S_1$ was on a leaf whose owner was $S_2$,
  and then it has extended the leaf.
  Hence, we swap the active points of $S_1$ and $S_2$.
  To start inserting the Type-2 suffixes in decreasing order of length,
  we first insert the longest Type-2 suffix $\mathtt{abb}$ at the locus of the active point of $S_1$.
  With the aid of the Weiner tree, we determine whether the active point can step forward along this edge by character $\mathtt{b}$.
  In this case, the active point cannot step forward, 
  and hence create a new internal node in the middle of this edge.
  After creating a new leaf from the new internal node 
  and its in-coming edge with the first character label $\mathtt{b}$, 
  we determine the label of the in-coming edge of the new internal node
  using Weiner tree.
  Then the active point traces the virtual suffix link from the new internal node $\mathtt{ab}$ to node $\mathtt{b}$.
  This virtual link can be computed by using the suffix link of node $\mathtt{a}$. 
  The next Type-2 suffix is $\mathtt{bb}$, 
  and the active point cannot step forward with $\mathtt{b}$.
  Therefore we create a new internal node in the middle of this edge.
  After creating a new leaf from the new internal node
  and its in-coming edge with the first character label $\mathtt{b}$, 
  we determine the label of the in-coming edge of the new internal node
  using Weiner tree.
  The reversed suffix link is set from this new internal node $\mathtt{b}$ to node $\mathtt{ab}$.
  Then the active point traces the virtual suffix link from the new internal node $\mathtt{b}$ to the root. 
  The next shorter suffix $\mathtt{b}$ is Type-3, since we can step forward with character $\mathtt{b}$ from the root.
  Therefore, we move the active point  from the root to node $\mathtt{b}$ that represents the longest repeating suffix of $S_1 \mathtt{b}$,
  and the reversed suffix link is set from root to the node $\mathtt{b}$.
%  The next longer suffix is $\mathtt{babb}$,
%  and hence we query $(v', \mathtt{b})$ to our suffix oracle,
%  where $v'$ is the node representing $\mathtt{ab}$,
%  and obtain node $\mathtt{b}$ in $O(\log \sigma)$ time.
%  We take \emph{any} out-going edge from this node $\mathtt{b}$, 
%  and find the locus of $\mathtt{bab}$ in $O(\log \sigma)$ time by a simple arithmetic.
%  Since it is possible to go down from the locus of $\mathtt{bab}$ with $\mathtt{b}$,
%  the shortest Type-1 suffix $\mathtt{babb}$ is found.
  Since we have inserted all the Type-2 suffixes,
  the update finishes. %% A snapshot of fully-online suffix tree construction,
  }
  \label{fig:suffix-tree-fully-online-snapshot}
\end{figure}

\subsubsection{Backward approach}

In this subsection,
we propose the backward approach
which traces a chain of (virtual) suffix links
in the reversed order and 
inserts new leaves in increasing order of their string lengths.

Suppose we have constructed $\STree(\mathcal{S}_{U[1..i-1]})$
and we are now given an update operator $U[i] = (k, a)$.
Consider the locus of the insertion point of
the shortest Type-2 suffix of the updated text $\S_k S_k a$
in the Ukkonen tree $\STree(\mathcal{S}_{U[1..i-1]})$.
This locus corresponds to the suffix of $\$_k S_ka$
that is exactly one character longer than
the longest Type-3 suffix $\longsuf_{S_{[U1..i-1]}}(\$_k S_ka)$
of $\$_k S_k a$ in the text collection $\mathcal{S}_{U[1..i-1]}$ before update.
In the backward approach
we first find this locus,
and insert the Type-2 suffixes of the updated text $\$_k S_k a$
in increasing order of lengths.
Since we trace the chain of suffix links backward,
we use the reversed suffix links with character labels.
In other words, we maintain \emph{the hard W-links on the Ukkonen tree}.
%the active point of the updated text $\$_k S_k a$,
%and traces the chain of (virtual) suffix links backward,
%until inserting the longest suffix of $\$_k S_k a$
%which occurs in the updated text collection $\mathcal{S}_{U[1..i]}$
%exactly once (namely, the longest ).
%until either we have inserted the leaf for the longest suffix of
%$\$_k S_k a$ (which is $\$_k S_k a$ itself),
%or we arrive at the first existing leaf which originally
%represents a suffix of another text in the collection.
%In the latter case, there are existing leaves
%which represent longer suffixes of $\$_k S_k a$
%and hence we can stop the traversal there.

We also remark that
we do not need to swap active points in this backward approach,
since we begin with the \emph{shortest} Type-2 suffix.
%at the new locus of the active point of $\$_k S_k a$,
This somewhat simplifies the concept of the algorithm and
might be an advantage over the forward counterpart presented in Section~\ref{sec:forward}.

To find the canonical reference to the locus of the
insertion point of the shortest Type-2 suffix of $\$_k S_k a$,
we use the spanning tree of $\DAWG(\mathcal{T}_{U[1..i]})$
which consists only of the primary edges.
This tree consists of the longest paths from the source
of the DAWG to its nodes,
and hence, it coincides with
the tree of \emph{the reversed hard W-links of the Weiner tree}
(this should not be confused with the hard W-links on the Ukkonen tree
for backward suffix link traversals).
For each $1 \leq i \leq N$,
let $\LPT(\mathcal{S}_{U[1..i]})$ denote this tree.
By the property of DAWGs (and hence that of the equivalence relation),
the following fact holds.
\begin{fact} \label{fact:primary_edges}
  For any $2 \leq i \leq N$,
  if an edge $e$ is a primary edge of $\DAWG(\mathcal{T}_{U[1..i-1]})$,
  then $e$ is a primary edge of $\DAWG(\mathcal{T}_{U[1..i]})$.
\end{fact}
We also use the following fact in our algorithm.
\begin{fact} \label{fact:branching}
  For any substring $x$ of texts in a left-to-right text collection
  $\mathcal{S}$, 
node $x$ is branching (explicit) in $\STree(\mathcal{\mathcal{S}})$ 
iff node $[x]_{\mathcal{S}}$ is branching in $\DAWG(\mathcal{S})$.
\end{fact}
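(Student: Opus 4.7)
The plan is to reduce both sides of the biconditional to a single combinatorial condition on which characters may follow $x$ as a substring in $\mathcal{T}$, leveraging the $\Endpos$-equivalence definition of $\DAWG$ nodes.

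The key step I would prove first is the following lemma: for every $y \in [x]_{\mathcal{T}}$ and every $a \in \Sigma$, the string $ya$ occurs in some $T_k \in \mathcal{T}$ iff $xa$ does. The direction $ya \Rightarrow xa$ uses $\Endpos_{\mathcal{T}}(y) = \Endpos_{\mathcal{T}}(x)$ directly: if $ya$ ends at position $(k,p+1)$ in $T_k$, then $(k,p) \in \Endpos_{\mathcal{T}}(y) = \Endpos_{\mathcal{T}}(x)$, so $x$ also ends at $(k,p)$ in $T_k$, and appending the character $T_k[p+1] = a$ gives an occurrence of $xa$. The converse is the special case $y = x$ (note $x \in [x]_{\mathcal{T}}$).

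From this lemma I would derive the following characterization: the set of labels of out-edges from $[x]_{\mathcal{T}}$ in $\DAWG(\mathcal{T})$ is exactly $\{\,a \in \Sigma \mid xa \text{ is a substring of some } T_k \in \mathcal{T}\,\}$. Indeed, the $\DAWG$ definition places an out-edge labeled $a$ from $[x]_{\mathcal{T}}$ precisely when some $y \in [x]_{\mathcal{T}}$ has $ya$ as a substring, and the lemma shows this is equivalent to $xa$ itself being a substring; moreover all such edges with the same label $a$ collapse into a single edge into $[xa]_{\mathcal{T}}$, so distinct labels give distinct out-edges.

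Given this characterization, the statement follows by unfolding definitions: node $x$ is branching (explicit) in $\STree(\mathcal{T})$ iff there exist two distinct $a, b \in \Sigma$ such that both $xa$ and $xb$ are substrings of some texts in $\mathcal{T}$, iff $[x]_{\mathcal{T}}$ admits at least two distinct out-edge labels in $\DAWG(\mathcal{T})$, iff $[x]_{\mathcal{T}}$ is branching in $\DAWG(\mathcal{T})$. The only nontrivial content lies in the key lemma; the main (very mild) obstacle is making sure one does not confuse an arbitrary representative $y$ of the equivalence class with $x$ itself, which is exactly what the $\Endpos$-equality resolves.
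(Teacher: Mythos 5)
Your proof is correct. The paper itself offers no argument for Fact~\ref{fact:branching}---it is introduced only with the remark that it ``follows from the definition of branching substrings''---so there is no route in the paper to compare yours against. What you have made explicit is the standard fact, derived from $\Endpos$-equality, that every member of a DAWG equivalence class $[x]_{\mathcal{T}}$ admits exactly the same set of right-extending characters as $x$ itself; this is the only nontrivial content hiding behind the paper's one-line assertion, and your derivation of it (shift the end position back by one, invoke $\Endpos_{\mathcal{T}}(y) = \Endpos_{\mathcal{T}}(x)$, shift forward again) is sound. It is also exactly the step needed, since an out-edge of $[x]_{\mathcal{T}}$ labeled $a$ is, a priori, witnessed by some $y$ in the class having $ya$ as a substring, not by $x$ itself. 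One small wording nit: the converse of your key lemma is not really ``the special case $y = x$'' (that instantiation yields only the tautology $xa \Rightarrow xa$); it instead follows from the same argument with the roles of $x$ and $y$ swapped, which is legitimate because $\Endpos_{\mathcal{T}}(x) = \Endpos_{\mathcal{T}}(y)$ is symmetric and both strings lie in the same class. Finally, the parenthetical ``(explicit)'' in the statement silently excludes leaves and the root, which are explicit in $\STree(\mathcal{T})$ without being branching; that imprecision is the paper's, not yours, and your argument correctly targets the branching condition on both sides.
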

Based on Fact~\ref{fact:branching}, for each $1 \leq i \leq N$,
we will maintain the NMA data structure $\LPT(\mathcal{S}_{U[1..i]})$
and mark its nodes iff they correspond to the
branching nodes of $\STree(\mathcal{S}_{U[1..i-1]})$.
Note that, due to Fact~\ref{fact:primary_edges},
no edges of $\LPT(\mathcal{S}_{U[1..i-1]})$ will be deleted
in $\LPT(\mathcal{S}_{U[1..i]})$
and only new edges will be added.
Hence we can use the NMA data structure on top of this tree.

The next lemma shows how we can efficiently find
the new locus of the active point for the updated text $\$_k S_k a$
in the Ukkonen tree.

\begin{lemma} \label{lem:longset_type3_LPT}
We can compute, in amortized $O(\log \sigma)$ time,
the canonical reference to
the locus of 
the active point of $\$_k S_k a$ on the Ukkonen tree,
using a data structure which requires $O(N)$ space.
\end{lemma}

\begin{proof}
Suppose we have constructed the Ukkonen tree $\STree(\mathcal{S}_{U[1..i-1]})$
in tandem with the Weiner tree $\STree(\mathcal{T}_{U[1..i-1]})$
and $\LPT(\mathcal{S}_{U[1..i-1]})$.
A node $v$ of $\LPT(\mathcal{S}_{U[1..i-1]})$ is marked
iff its corresponding node $\rev{v}$ in
the Weiner tree $\STree(\mathcal{T}_{U[1..i-1]})$
has at least two W-links defined, namely,
$\sw_{c}(\rev{v}) = \sw_{c'}(\rev{v}) = 1$
with at least two distinct characters $c \neq c'$.
This in turn implies that the corresponding node of the
(implicitly maintained) DAWG is branching.
Every marked node of $\LPT(\mathcal{S}_{U[1..i-1]})$ is linked 
to its corresponding node of the Ukkonen tree $\STree(\mathcal{S}_{U[1..i-1]})$
which is also branching by Fact~\ref{fact:branching}
(see also Figure~\ref{fig:LPT}).
We also maintain an NMA data structure on $\LPT(\mathcal{S}_{U[1..i-1]})$.

\begin{figure}[tb]
  \centerline{
    \includegraphics[width=1.0\textwidth]{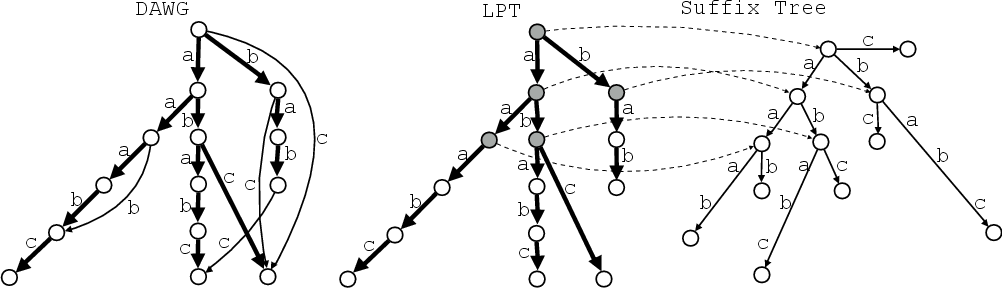}
  }
  \caption{Illustration for $\DAWG(\mathcal{S}_{U[1..14]})$,
  $\LPT(\mathcal{S}_{U[1..14]})$, 
  and the Ukkonen tree $\STree(\mathcal{S}_{U[1..13]})$ before update,
  where
  $\mathcal{S}_{U[1..13]} = \{S_1 = \mathtt{aaab}, S_2 = \mathtt{ababc}, S_3 = \mathtt{bab}\}$
  and $\mathcal{S}_{U[1..14]} = \{S_1\mathtt{c}, S_2, S_3\}$.
  For simplicity, we here omit the terminate symbols $\$_1$, $\$_2$, and $\$_3$.
  The bold solid arrows represent the primary edges of $\DAWG(\mathcal{S}_{U[1..14]})$,
  the gray nodes are the marked nodes of $\LPT(\mathcal{S}_{U[1..14]})$, and
  the dashed arrows represent the links between the marked nodes of 
  $\LPT(\mathcal{S}_{U[1..14]})$ and the corresponding branching nodes of
  $\STree(\mathcal{S}_{U[1..13]})$.
  The longest repeating suffix of $S_1 \mathtt{c}$ in $\mathcal{S}_{U[1..14]}$
  is $\mathtt{abc}$,
  and hence we perform an NMA query from node $\mathtt{abc}$ on $\LPT(\mathcal{S}_{U[1..14]})$, obtaining node $\mathtt{ab}$.
  We then access the suffix tree node $\mathtt{ab}$ using the link from $\LPT(\mathcal{S}_{U[1..14]})$, and obtain the canonical reference $(\mathtt{ab}, \mathtt{c}, 1)$ to $\mathtt{abc}$ on the Ukkonen tree $\STree(\mathcal{S}_{U[1..13]})$ before update.
  }
  \label{fig:LPT}
\end{figure}

Given an update operator $U[i] = (k, a)$,
we first update the Weiner tree to $\STree(\mathcal{T}_{U[1..i]})$.
This introduces at most two new hard W-links,
one for the new leaf and one for its parent.
This means that these edges are also inserted to $\LPT(\mathcal{S}_{U[1..i-1]})$ 
and we then obtain $\LPT(\mathcal{S}_{U[1..i]})$.
Because of these new edges,
at most two DAWG non-branching nodes can become branching.
We mark their corresponding nodes in $\LPT(\mathcal{S}_{U[1..i-1]})$,
and link them to the corresponding Ukkonen tree nodes
\emph{only after}
we have built the updated Ukkonen tree $\STree(\mathcal{S}_{U[1..i-1]})$.
This is because the corresponding nodes of $\STree(\mathcal{S}_{U[1..i-1]})$
before the update are still non-branching (see Fact~\ref{fact:branching}).

Let $\rev{y}$ be the insertion point of the leaf $a T_k \$_k$
in the Weiner tree
which is the longest repeating prefix of $a T_k \$_k$ in
the right-to-left text collection $\mathcal{T}_{U[1..i]}$.
By the definition of $\LPT(\mathcal{S}_{U[1..i]})$,
there is a node in $\LPT(\mathcal{S}_{U[1..i]})$ which represents $y$.
We conduct an NMA query from $y$ on $\LPT(\mathcal{S}_{U[1..i]})$,
and let $v$ be the NMA of $y$.
Let $\ell = |y| - |v|$,
and let $c$ be the label of the first edge in the path
from $v$ to $y$.
We move from $v$ to its corresponding node $x$ in
the Ukkonen tree $\STree(\mathcal{S}_{U[1..i-1]})$.
Then, $(x, c, \ell)$ is a reference to the insertion point
of the shortest Type-2 suffix of $\$_k S_k a$.
%the active point of $\$_k S_k a$
%in $\STree(\mathcal{S}_{U[1..i-1]})$.
Since $v$ is the NMA of $y$ in $\LPT(\mathcal{S}_{U[1..i]})$,
and since updating $\$_k S_k$ to $\$_k S_k a$ does not explicitly insert
any suffix of $\$_k S_k a$ that is shorter than the longest repeating suffix 
of $\$_k S_k a$ in $\mathcal{S}_{U[1..i]}$,
this reference is canonical by Fact~\ref{fact:branching}.

Clearly the total size of the above data structures
is linear in the total length $N$ of the texts in
the final text collection $\mathcal{S}$.
We analyze the time complexity.
We can find the insertion point $y$ of the new leaf
in the Weiner tree in amortized $O(\log \sigma)$ time
due to Theorem~\ref{theo:fully-online_weiner}.
Using the link from the node $y$ in $\LPT(\mathcal{S}_{U[1..i]})$,
the corresponding node in the Ukkonen tree $\STree(\mathcal{S}_{U[1..i-1]})$
can be found in $O(1)$ time.
Updating $\LPT(\mathcal{S}_{U[1..i-1]})$ to
$\LPT(\mathcal{S}_{U[1..i]})$ takes $O(\log \sigma)$ amortized time.
Inserting a new node
and querying an NMA from a given node takes amortized $O(1)$ time.
We can link a new marked node of $\LPT(\mathcal{S}_{U[1..i]})$
to the corresponding new branching node of $\STree(\mathcal{S}_{U{1..i}})$
in $O(1)$ time,
since it is easy to remember this new branching node when updating 
$\STree(\mathcal{S}_{U[1..i-1]})$ to $\STree(\mathcal{S}_{U[1..i]})$.
Hence, the total amortized bound is $O(\log \sigma)$.
\end{proof}

Let $w$ and $w'$ denote the strings
that are represented by the loci of the insertion points
of the shortest and longest new leaves w.r.t. the update operator
$U[i] = (k, a)$.
Let $q = |w'|-|w|+1$ be the number of new leaves to be inserted
in the Ukkonen tree.
Our backward approach terminates the $i$th update
after inserting the $q$th new leaf.
How do we compute this value $q$?
If $(x, c, \ell)$ is the canonical reference to the locus for $w$,
then $|w| = |x|+\ell$,
and hence what remains is how to compute $|w'|$.
We note that $w'$ is the longest suffix of $\$_k S_k$
which has at least one more occurrence in $\mathcal{S}_{U[1..i]}$
immediately followed by another character $b \neq a$.
This is because any longer suffix of $\$_k S_k$ is immediately
followed only by $a$, and will thus correspond to existing leaves
in the updated Ukkonen tree.
These two occurrences of $w'$ must be immediately preceded by
distinct characters, say $c$ and $d$, in the left-to-right text collection
$\mathcal{S}_{U[1..i]}$ since otherwise there will be a longer suffix of
$\$_k S_k$ which has at least one more occurrence in $\mathcal{S}_{U[1..i]}$,
a contradiction.
Also, $\rev{w'}c$ and $\rev{w'}d$ occur in the right-to-left
text collection $\mathcal{T}_{U[1..i-1]}$ before the $i$th update.
Thus, $\rev{w'}$ is represented by an explicit node in the Weiner tree
$\STree(\mathcal{T}_{U[1..i-1]})$.
Since this node is on the path from the leaf for $T_k \$_k$ to
the root of the Weiner tree,
and since it is the deepest node with the hard W-link for character $a$,
we visit this node during the update of the Weiner tree.
Hence, we can compute $|w'|$ in $O(\log \sigma)$ amortized time
by the aid of the Weiner tree.

%\textbf{TODO: Verify the number of Type-2 suffixes to insert. Weiner tree should help to compute this.}

The cost to trace the suffix link chains in this backward approach
is exactly the same as that in the forward approach.
Hence, the total cost is for suffix link chain traversals is $O(N \log \sigma)$
for all $1 \leq i \leq N$ by Lemma~\ref{lem:canonization_amortized_cost}\footnote{In the preliminary versions~\cite{TakagiIA15,TakagiIA16} of this paper, a simplified version of the \emph{suffix tree oracle}~\cite{FischerG2015} was used to obtain the same bound. However, we do not need it any more due to our amortization argument of Lemma~\ref{lem:canonization_amortized_cost}.}.

The lower bound of Lemma~\ref{lem:leaf_ownership_lowerbound}
also applies to this backward approach.
Hence, we do not maintain the leaf ownerships,
and we label the edges leading to the leaves
only with their first characters.

We have shown the following:
\begin{theorem} \label{theo:fully-online_ukkonen_backward}
Given a fully-online sequence $U$ of $N$ update operators
for a collection of $K$ left-to-right texts $\mathcal{S}$,
our backward version of Ukkonen's algorithm can update the suffix tree
in a total of $O(N \log \sigma)$ time and $O(N)$ space.
\end{theorem}

\section{Conclusions and open problems}

In this paper, we considered construction of
the suffix tree and the DAWG of the fully-online multiple texts,
where new characters can be added to any of the texts.

Our contribution is two-folds.

First, we proposed the fully-online version
of Weiner's suffix tree construction algorithm
for a collection of $K$ right-to-left growing texts.
This algorithm runs in $O(N \log \sigma)$ time with $O(N)$ space,
where $N$ is the total length of the texts in the collection
and $\sigma$ is the alphabet size.
We showed that the direct application of Weiner's algorithm
to our fully-online setting takes $\Theta(N \min(K, \sqrt{N}))$ time
(Lemma~\ref{lem:Weiner_lower_bound}),
and showed that how it can be modified to run in $O(N \log \sigma)$ time
with a novel use of NMA data structures which occupy $O(N)$ total space
(Theorem~\ref{theo:fully-online_weiner}).
We also showed an algorithm which simulates soft W-links with hard W-links
and these NMA data structures in $O(\log \sigma)$ time per query,
which immediately gives us an $O(N \log \sigma)$-time construction algorithm
for an $O(N)$-space representation of the DAWG
for a fully-online left-to-right text collection (Corollary~\ref{coro:fully-online_dawg}).

Second, we proposed two variants of
the fully-online version of Ukkonen's suffix tree
construction algorithm for a collection of $K$ left-to-right growing texts.
We showed that explicit maintenance of the owners of leaves
requires us super-linear cost (Lemma~\ref{lem:leaf_ownership_lowerbound})
in the worst case.
Then, we proposed the first variant called the forward approach,
which runs in $O(N \log \sigma)$ time with $O(N)$ space.
The key to this forward approach is the notion of swapping active points
and the efficient algorithm for answering leaf ownerships
in a spacial case which happens during the construction of the suffix tree.
The second variant called the backward approach traces
a virtual suffix link chain in the reversed direction to the forward approach,
and also works in $O(N \log \sigma)$ time with $O(N)$ space.

%The main contribution of this paper is 
%an $O(N \log \sigma)$-time algorithm 
%to maintain the suffix tree
%for a text collection
%in the left-to-right fully-online setting,
%where $N$ and $\sigma$ are the total text length
%and the alphabet size, respectively.
%The key was a non-trivial use of the DAWG.

There are many intriguing open problems for the left-to-right
fully-online suffix tree construction.
Examples are the following:
\begin{enumerate}
\item[(1)] Is it possible to maintain the Ukkonen's tree
for a left-to-right text collection \emph{without} the aid of
the Weiner tree for the corresponding right-to-left text collection?

\item[(2)] Is there a data structure which maintains leaf ownerships
in an \emph{implicit} manner,
so that the ownership of an \emph{arbitrary} leaf can be efficiently
answered upon query, at any stage of the construction algorithm?

\item[(3)] Our bound is amortized, namely, for each new character
    our algorithm takes $O(\log \sigma)$ amortized time.
    Is it possible to de-amortize it,
    e.g. by using techniques in~\cite{BreslauerI13,abs-1302-3347,FischerG2015}?

\item[(4)] Is it possible to extend our approach for
  a \emph{bidirectional} fully-online text collection,
  where each text can grow both directions?
  There is a $O(n \log \sigma)$-time $O(n)$-space algorithm
  for constructing the suffix tree for a single bidirectional text
  of length $n$~\cite{Inenaga03}.
  We note that for a bidirectional fully-online text collection,
  we cannot use terminal $\$_k$ symbols either ends of each text
  during the updates.
\end{enumerate}

%% BIB %%%%%%%%%%%%%%%%%%%%%%%%%%

%% \bibliographystyle{plainurl}
%% \bibliography{ref}

\end{document}